\newif\iflong 
\theoremstyle{plain}
\newtheorem{claim}{Claim}[section]
\theoremstyle{remark}
\title{Perpetual Demand Lending Pools}
\author{Tarun Chitra\\ 
        Gauntlet \\
        \texttt{\small tarun@gauntlet.xyz} 
\and Theo Diamandis \\
NYU \\ 
\texttt{\small td2590@nyu.edu} 
\and Nathan Sheng \\
Gauntlet \\
\texttt{\small nathan.sheng@gauntlet.xyz}
\and Luke Sterle \\ 
Gauntlet \\
\texttt{\small luke@gauntlet.xyz} 
\and Kamil Yusubov \\
Gauntlet \\
\texttt{\small kamil@gauntlet.xyz}}
\date{\today}
\newcommand{\ones}{\mathbf 1}
\newcommand{\reals}{{\mbox{\bf R}}}
\newcommand{\naturals}{{\mbox{\bf N}}}
\newcommand{\Expect}{\mathop{\bf E{}}}
\newcommand{\Var}{\mathop{\bf Var{}}}
\newcommand{\Dom}{\mathop{\bf Dom}}
\newcommand{\Prob}{\mathop{\bf Prob}}
\newcommand{\argmax}{\mathop{\rm argmax}}
\newcommand{\eg}{{\it e.g.}}
\newcommand{\ie}{{\it i.e.}}
\newcommand{\BEAS}{\begin{eqnarray*}}
\newcommand{\EEAS}{\end{eqnarray*}}
\newcommand{\BEA}{\begin{eqnarray}}
\newcommand{\EEA}{\end{eqnarray}}
\newcommand{\BEQ}{\begin{equation}}
\newcommand{\EEQ}{\end{equation}}
\newcommand{\BIT}{\begin{itemize}}
\newcommand{\EIT}{\end{itemize}}
\begin{document}
\maketitle

\begin{abstract}
    Decentralized perpetuals protocols have collectively reached billions of
    dollars of daily trading volume, yet are still not serious competitors on
    the basis of trading volume with centralized venues such as Binance. One of
    the main reasons for this is the high cost of capital for market makers and
    sophisticated traders in decentralized settings. Recently, numerous
    decentralized finance protocols have been used to improve borrowing costs
    for perpetual futures traders.
    These protocols have grown to over \$2.5 billion dollars of assets while generating over \$890 million in fees in 2024. 
    We formalize this class of mechanisms
    utilized by protocols such as Jupiter, Hyperliquid, and GMX, which we
    term~\emph{Perpetual Demand Lending Pools} (PDLPs). We then formalize a
    general target weight mechanism that generalizes what GMX and Jupiter are
    using in practice. We explicitly describe pool arbitrage and expected
    payoffs for arbitrageurs and liquidity providers within these mechanisms.
    Using this framework, we show that under general conditions, PDLPs are easy
    to delta hedge, partially explaining the proliferation of live hedged PDLP
    strategies. Our results suggest directions to improve capital efficiency in
    PDLPs via dynamic parametrization. 
\end{abstract}

\section{Introduction}
Perpetual futures markets are the most liquid trading markets for
cryptocurrencies. These markets facilitate daily volumes on the order of
hundreds of billions of dollars of notional value and are the primary drivers of
hedging and leverage. Although the majority of perpetual futures volume remains
on centralized trading venues, such as Binance or Coinbase, a growing amount of
perpetuals trading occurs on decentralized venues. Such venues, including
Hyperliquid, Jupiter, GMX, and dYdX, have grown to capture almost 10\% of the
daily perpetual future volume~\cite{block-cex-dex-perps}.

Although decentralized venues like dYdX have been live since 2019, the failure
of FTX in 2023 has driven demand for decentralized venues. Exchanges that hold
the state of all balances and positions on a public blockchain can ensure that
user funds are not misappropriated. However, these decentralized exchanges face
a number of challenges that their centralized counterparts do not. The public
nature of positions can lead to front-running and other forms of manipulation.
In addition, the lack of identity verification (which permits traders to walk
away from bad positions) leads to higher collateral requirements for margin
trading.

\paragraph{Market maker loans.} 
To promote liquidity on their platforms, centralized exchanges often offer
market makers loans that can only be utilized on their exchange. In general,
these loans are only partially-collateralized and, as a result, present risks
for other users of the platform. For example, FTX used customer funds to offer a
market maker loan to Alameda Research that did not require any
collateral~\cite{deklotz2023ftx, allen2022testimony, breydo2024contagion}.
Decentralized platforms, by construction, cannot lend out customer funds without
their consent. However, these platforms have higher collateral requirements and
must turn to other mechanisms to provide market maker loans.

\paragraph{Perpetual Demand Lending Pools.}
Decentralized perpetual future exchanges have created novel liquidity pools that
lend to traders on their platform. GMX~\cite{GMX-GLP} first pioneered these
pools, which borrow aspects from both  decentralized spot trading (\eg,
Uniswap~\cite{angeris2021analysis}) and decentralized lending (\eg, Aave and
Compound~\cite{kao2020analysis}). A number of other protocols including
Jupiter~\cite{Jupiter-JLP} and Hyperliquid~\cite{Hyperliquid-HLP} have since
followed suit. These pools allow traders to borrow assets for only one purpose:
to open positions on the associated perpetuals exchange. The protocol liquidates
positions as soon as they become undercollateralized. Since these loans resemble
demand loans~\cite{jones1980valuation}, we call these mechanisms \emph{Perpetual
Demand Lending Pools} (PDLPs).

\paragraph{PDLP mechanics.}
Although perpetual demand lending pools have different implementation details
with regard to how their lending functions operate, they share common traits:
\begin{itemize}
    \item Users pool together funds into a liquidity pool. The protocol only
    permits certain assets in this pool and maintains a target composition.
    \item Traders, including market makers, borrow from the pool to open 
    positions on the associated exchange.
    \item LPs earn fees from traders who borrow from the pool. These fees are
    proportional to the size of the position and are paid at regular intervals.
    \item Arbitrageurs ensure the pool stays at the protocol's desired target
    composition.
    \item LPs, rather than the protocol itself, realize losses when the pool
    cannot rebalance or when the protocol cannot liquidate underwater positions
    quickly enough.
\end{itemize}

\begin{figure}
    \centering
    \includegraphics[width=\linewidth]{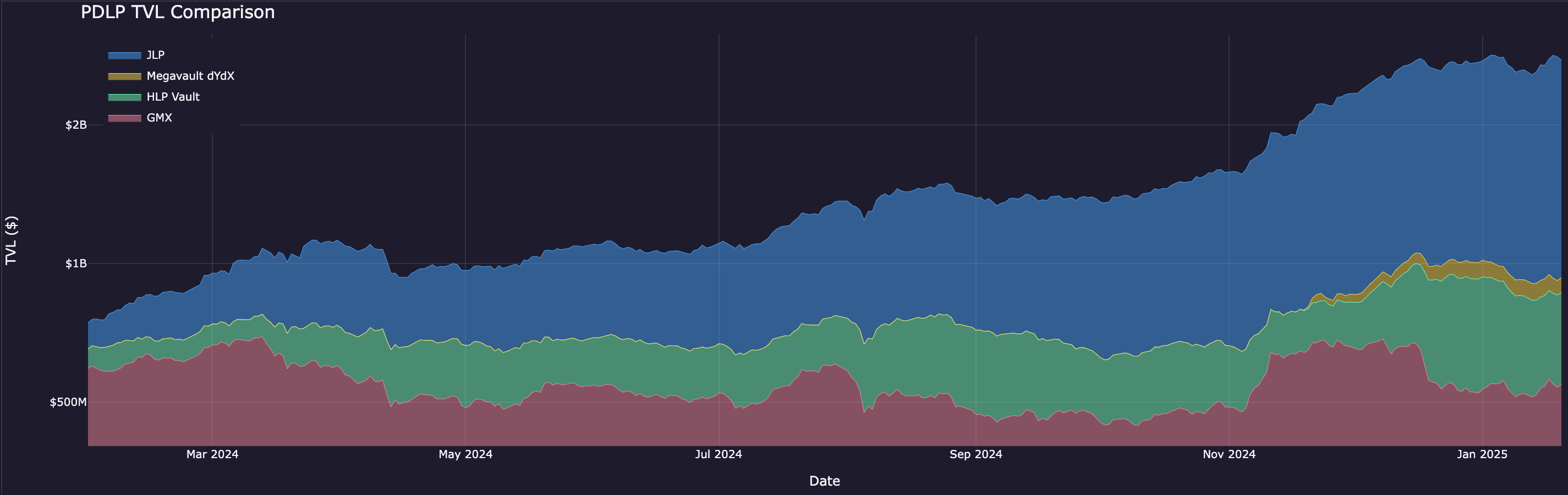}
    \caption{Total Value Locked for PDLPs  (\ie~assets locked into PDLP pools)
    in 2024 (\href{https://app.hex.tech/a22e666c-4bc7-40d3-93ed-45a25554a14e/hex/7edb6b2b-a2d5-499b-a457-62a5106cffe0}{Link To Data}).}
    \label{fig:pdlp-tvl}
\end{figure}

\paragraph{Background on existing PDLPs.}
PDLPs have amassed over \$2.5 billion in assets and generated roughly \$897.73
million dollars in aggregate fees\footnote{ As of December 27, 2024, the
cumulative fees generated by each protocol are: Jupiter (\$445.48
million~\cite{jupiter-perps-fees}), dYdX MegaVault (\$4.25
million~\cite{gauntlet-dydx}), HLP (\$58 million
\cite{Hyperliquid-HLP-vault}), GMX v1 + v2 (\$390
million~\cite{gmx-perps-fees}) } since GMX launched in September 2021. Jupiter's
JLP pool and Hyperliquid's HLP pool followed, each making changes to the
original GMX PDLP mechanism. In the last year, the total value of assets in
PDLPs has grown from under \$750 million to over \$2.5 billion, and these
positions have yielded a return of approximately 40\% in 2024. We note that the
rising popularity of PDLPs has tracked the growth of perpetuals trading on
decentralized exchanges.

\paragraph{Delta hedging.}
PDLP LPs have had success delta hedging in practice. These LPs, who provide
assets to the pool, can take derivative positions to offset their exposure to
risky asset price movement (the delta) while still capturing fees. This property
allows LPs to enter and exit positions without realizing large convexity losses
due to volatility~\cite{hull2017optimal} and, as a result, facilitates cheaper
market making~\cite{huh2015options, stoikov2009option}. We note that many
previous attempts in decentralized finance to provide delta-hedged portfolios
have been unsuccessful in practice (\eg, see~\cite{angeris2023replicating,
khakhar2022delta, lambert2022panoptic}). This observation begs the question:
Why are PDLPs easier to hedge? We characterize PDLP mechanics and shed light on 
this question in this paper.

\subsection{This paper.}
We provide the first formalized (to the authors' knowledge) definitions and
description of PDLPs. We aim to formalize the financial properties of these 
pools and explain their large amount of fee revenue.

In~\S\ref{sec:pdlp}, we define PDLPs and describe how perpetuals exchanges use
these pools. We then describe single-period arbitrage of the perpetual funding
rate---the core mechanism that ensures the futures contract and the spot asset
have consistent prices. We also describe arbitrage opportunities in the PDLP
itself, which are similar to those in constant function market makers. We give
necessary conditions on the PDLP fee so that both the funding rate arbitrageur
and the PDLP liquidity providers are profitable. The arbitrage loss faced by LPs
resembles loss-versus-rebalancing (LVR) from the CFMM
literature~\cite{milionis2022automated}. 

In~\S\ref{sec:weight-based-arb}, we formalize mechanisms used by PDLPs to stay
near a target portfolio, which are usually specified in terms of desired
relative weights, These mechanisms, called target weight mechanisms, provide
economic incentives for liquidity providers to add or remove assets so that the
PDLP maintains a target portfolio. Unlike CFMMs, PDLP liquidity providers may
deposit (or withdraw) any subset of assets to (or from) the pool. This feature
creates novel opportunities for arbitrageurs which resemble create-redeem
arbitrage in exchange traded funds~\cite{pan2017etf}. We model these arbitrage
opportunities as optimization problems, which can be approximately solved in
practice. Using this model, we demonstrate that the target weight mechanism
bounds an LP's delta exposure within in PDLPs. This fact may explain the
proliferation of successful PDLP hedging strategies, especially for those
utilizing Jupiter's JLP~\cite{Gauntlet-hJLP}. We note that hedging LVR in CFMMs
is possible, but these protocols have far less usage in
practice~\cite{defi-lp-do-not-worry, lipton2024unified, lambert2022panoptic}.

Finally, in~\S\ref{sec:hedged-pdlp}, we describe dynamic delta hedging of risk
assets in a PDLP using mean-variance optimization. We give simple sufficient
conditions for when the delta hedge improves the Sharpe ratio of PDLP LP returns.
We also describe when a single PDLP pool should be split into two pools to 
improve the delta-hedged Sharpe ratio.

\paragraph{Notation.}
We denote the unit simplex as $S^n = \{ (w_1, \ldots, w_n) \in \reals^n_+ :
\sum_i w_i = 1\}$. We use $\reals^n_+$ for the nonnegative orthant and
$\reals^n_{++}$ for the positive orthant in $n$ dimensions. We denote set of
natural numbers from $1$ to $k$ is denoted as $[k] = \{1, 2, \ldots, k\}$. The
$L^p$ norm of a vector $x \in \reals^n$ is denoted as $\|x\|_p = \left(\sum_i
|x_i|^p\right)^{1/p}$.

\section{Perpetual Demand Lending Pools}\label{sec:pdlp} 
In this section, we formalize a model of Perpetual Demand Lending Pools (PDLPs).
Note that we ignore implementation details of practical mechanisms that do not
impact economic payoffs. We begin by describing a simplified linear model of a
perpetuals exchange. We refer the interested reader
to~\cite{ackerer2024perpetual} for models that account for additional parameters
found in practice.

\subsection{Perpetuals exchange}
A perpetuals exchange facilitates perpetual future contracts between two groups
of users: those holding long positions and those holding short positions in some
underlying asset. In these contracts, the group of users with the larger
cumulative position pays a fee, called the \emph{funding rate}, at regular
intervals to the other group. This fee incentivizes the two positions to be
roughly equal. Arbitrageurs who observe an imbalance between the long and short
positions can equalize the positions and earn a profit if the price moves in the
same direction as the imbalance.

Concretely, we define a perpetual future contract for some asset as a triple
$(L, S, p_0) \in \reals_+^3$, where $L$ and $S$ are respectively the cumulative
long and short positions, and $p_0$, is the \emph{mark price} of the underlying
asset, which is set at the beginning of each time interval. The funding rate is
a function of these three parameters, and of the price of the underlying asset
at the end of this interval. We can interpret the mark price $p_0$ as the price
at which two parties enter into the perpetual contract to bet on a price change
during the time interval. (For example, large centralized exchanges such as
Binance, OkEx, and Coinbase update this mark price every eight
hours~\cite{ackerer2024perpetual}.) Arbitrageurs will ensure that, at the
beginning of the next interval, the mark price is equal to the price of the
underlying asset. A perpetuals exchange contains perpetual future contracts for
some collection of $n \in \mathbb{N}$ assets.

\paragraph{Trades.}
A user trades on a perpetuals exchange by placing collateral on the exchange and
then  opening a long or short position in some asset. We will assume, for
simplicity, that the collateral is a num\'eraire (such as US dollars) whose
price at all times is 1. A \emph{trade} is defined as a tuple $(c, \delta, \eta,
p_0) \in \reals_+ \times \reals_+ \times (\reals -\{0\}) \times \reals_+$ where
$c$ is the collateral the user places, $\delta$ is the position size, $\eta$ is
the leverage level, and $p_0$ is the price when the trade is created. A position
is a short position if $\eta < 0$ and has leverage if $|\eta| > 1$. At the time
of creation, the position requires collateral $c$ that satisfies the
\emph{collateral condition}:
\begin{equation}\label{eq:collateral-condition}
    p_0\delta \le \lvert{\eta}\rvert c.
\end{equation}
This means that a user can create a notional position with size $p \delta$ that
is up to $|\eta|$ times larger than the collateral they deposit.\footnote{This
condition is what is often meant when the claim that perpetual futures are more
capital efficient for leverage than traditional lending protocols is made,
see~\eg,~\cite{ackerer2024perpetual}.}

\paragraph{Liquidations.}
A trader must ensure that the trade has sufficient collateral as the mark price
of the risky asset $p$ fluctuates, otherwise the exchange will liquidate this
position. Define the \emph{liquidation condition} as 
\[
    \mathbf{sign}(\eta) \delta (p_0 - p) \ge c.
\]
In words, if the the loss in the perpetuals position due to price change is
greater than the collateral value, the position will be liquidated. For example,
a long position is liquidated once the price of the risky asset drops below $p_0
- c/\delta$. If this position was created with minimal collateral,
saturating~\eqref{eq:collateral-condition}, then the position is liquidated when
the price drops below $p_0(1 - 1/\eta)$.

\paragraph{Example.}
Consider a trade \( T = (c, \delta, \eta, p_0) \) with collateral $c = \$2000$,
position size $\delta = 1$ ETH, leverage $\eta = 4$ and initial ETH price $p_0 =
\$2000$. The initial notional value of the position, $\delta p_0$, is 8,000 USD.
This trade satisfies the collateral condition~\eqref{eq:collateral-condition}.
As the price of ETH $p$ decreases, the position remains valid until the liquidation
condition is met, which occurs when $p = \$1,5000$. Once this price is reached,
the perpetuals exchange will liquidate this position.

\paragraph{Funding rate.}
We consider the linear funding rate 
\begin{equation}\label{eq:funding-rate}
    \gamma_L(L, S, p, p_0) = \kappa \left(\frac{L}{S} - \frac{p}{p_0}\right),
\end{equation}
where $p$ is the price of the underlying asset and $\kappa > 0$ is a constant.
Note that on-chain perpetuals exchanges typically use a price oracle to
determine the price $p$. This price oracle is often (but not always) the median
of the prices on a set of centralized exchanges with sufficient liquidity. In
practice, most exchanges use the linear funding rate model, although a number of
other models exist in the literature (for examples,
see~\cite{ackerer2024perpetual, angeris2023primer, he2022fundamentals}). If the
funding rate is positive ($\gamma_L > 0$), then the users with short positions
pay those with long positions, and if the funding rate is negative $(\gamma_L <
0$), then the users with long positions pay those with short positions.

\paragraph{Example.}
Consider an ETH-USD perpetual futures contract $(L, S, p_0)$ with cumulative long
position $L = 1000$ ETH, short position $S = 250$ ETH, and mark price $p_0 =
2000$ USD, the funding rate is $\gamma_L = \kappa\left(\frac{L}{S} - \frac{p}{p_0}\right)$.
When $p = p_0$, including at the start of the contract period, this equation simplifies to:  
\[
    \gamma_L = \kappa\left(\frac{1000}{250} - 1\right) = 3\kappa
\]   
Here, longs pay shorts, which incentivizes traders to take short positions until
$\gamma_L = 0$.

\subsection{Perpetual Demand Lending Pools}\label{sec:pdlp-desc}
A perpetual demand lending pool allows traders to borrow
assets to open positions on an associated perpetuals exchange. Yield-seeking
depositors, called \emph{liquidity providers} (LPs), deposit these assets into
the pool. Traders can then use these assets to collateralize their positions on
the perpetuals exchange, for which they pay fees to the liquidity providers.
Since most of these positions are levered, liquidity providers effectively
provide under-collateralized loans which only may be used for trading on the
associated perpetuals exchange.

\paragraph{Definition.}
Formally, we specify a Perpetual Demand Lending Pool (PDLP) by its current
portfolio $R \in \reals^n_+$, target portfolio $\pi \in \reals^b_+$, lending fee
$f \in (0,1)$, and outstanding loans $\{c_i\}_{i \in [M]}$, where each loan $c_i$ is the
collateral used by position $i = 1, \dots, M$. Note that one can view the loans
made by the PDLP as collateralizing an equal but opposite position to the one
desired by the trader, much as a market maker takes the opposite size of a
leveraged trade. The trader associated with position $i$ must pay a fee $fc_i$
for each time period that they hold a loan. We note that the target portfolio
and fees may be dynamically updated based on external market conditions. The
total value lent to traders must be less than the total value of the pool's
assets, \ie,
\[
    \sum_{i=1}^M c_i \leq R.
\]
If any position $i$ becomes invalid, the protocol 
liquidates this position and returns the collateral $c_i$ back to the PDLP.

\paragraph{Dynamics.}
Using the definitions above, we model how trades interact with the perpetuals
exchange and how LPs utilize the PDLP via a fixed sequence of transaction
execution. First, note that we assume that there exists an unmanipulable price
oracle that submits a price update for the $n$ assets at each block. The
sequence of interactions executed by a solvent, collateralized PDLP is:
\begin{enumerate}
    \item The price oracle is updated.
    \item Liquidatable positions are removed.
    \item Funding rates, based on the positions opened at the previous period
    and the oracle price, are paid out.
    \item Fees for loans used in the previous period are paid out to LPs from
    trader positions.
    \item LPs update their PDLP portfolios.
    \item Traders submit new trades and valid trades are executed.
\end{enumerate}
In Appendix~\ref{app:dynamics}, we formalize these dynamics in terms of the
state variables defined so far. We also note that the dynamics here do not
include any constraints on updates to the target portfolio since different
protocols use different mechanisms to update the target portfolio. We study the
mechanism used by GMX and Jupiter to adjust and/or realize a target portfolio
in~\S\ref{sec:weight-based-arb}.

\paragraph{Liquidity creation and redemption.}
When a liquidity provider (LP) deposits assets into a PDLP, they receive tokens
which entitle them to a pro-rata claim on the pools' portfolio and fees. An LP
may deposit or withdraw any subset of assets into the PDLP. Note that this
mechanism differs from that of a constant function market maker (CFMM), where
LPs generally must deposit or withdraw all assets and/or define a price range
over which their liquidity can be used. (See~\cite{angeris2022constant,
bar2023uniswap, fan2021strategic} for details.) PDLPs also allow swaps between
the assets in the pool, similarly to a CFMM. Unlike a CFMM, a PDLP typically
sets swap prices using an external price oracle and charges a dynamic fee chosen
to maintain the PDLP's target portfolio. LPs can create a share of a PDLP,
entitling them to a pro-rata claim on the pool's assets and fees, by tendering
any of the assets held by the PDLP. Similarly, they can redeem a PDLP share for
any valid portfolio that has the same value as the share.

\subsection{Arbitrage in a single period}\label{subsec:arb-single} 
We consider two arbitrage opportunities created by a price movement in the
underlying asset: funding rate arbitrage and PDLP share arbitrage.
We will show that both the pool liquidity providers, who deposit assets prior to
the funding rate change, and funding rate arbitrageurs are profitable as long as
the PDLP fees remain inside a particular interval.

Assume that, at the beginning of the period, the price of the underlying asset
is $p_0$ and the funding rate is zero, so long and short positions have the same
size: $L = S = L_0$ (see~\eqref{eq:funding-rate}). When the price of the
underlying asset moves from $p_0$ to $p$, this price movement opens two
arbitrage opportunities:
\begin{enumerate}
    \item A funding rate arbitrageur borrows from the PDLP pool and opens a long
    or short position to capture the funding rate until the mark price and true
    price converge.
    \item A PDLP arbitrageur creates or redeems shares of the PDLP to capture
    the spread between the value of the PDLP's assets and the LP token's spot price.
\end{enumerate}
We describe each of these opportunities and bound the loss of the PDLP. We
assume that $p > p_0$, but an entirely symmetric derivation gives the case when
$p < p_0$.

\subsubsection{Funding rate arbitrage.}
The funding rate~\eqref{eq:funding-rate} after price movement is
\[
\gamma_L(L_0, S_0, p, p_0) = \kappa \left(\frac{L_0}{S_0} - \frac{p}{p_0}\right)
    = \kappa\left(1 - \frac{p}{p_0}\right) < 0.
\]
Thus, the users with short positions must pay those with long positions. A
funding rate arbitrageur opens a long position of size $\ell$ to capture this
funding rate prior to the next funding payment made by the PDLP. This arbitrage
exists as long as the funding rate continues to be non-positive; thus, the
largest long position that can capture this funding rate (\ie, the $\ell$ that
makes $\gamma_L(L_0 + \ell, S_0, p, p_0) = 0$) is given by
\begin{equation}\label{eq:long-val}
    \ell = L_0\left(\frac{p}{p_0} - 1\right).
\end{equation}
After opening this position, the arbitrageur receives a pro-rata share of the
funding rate: $\ell/(L_0 + \ell) \cdot \left\vert \gamma_{L}(L_0, S_0, p,
p_0)\right \vert$. To open this position, the arbitrageur must pay a fee
$f\cdot\ell$. Thus, the arbitrageur makes a profit when the fee is less than the
revenue,~\ie
\[
f \ell \leq \frac{\kappa \ell}{L_0 + \ell} \left( \frac{p}{p_0} - 1\right)
= \frac{\kappa \ell}{L_0}\left(1 - \frac{p_0}{p}\right)
\]
where we utilized~\eqref{eq:long-val} in the equality. 
If the relative price increment is bounded, $1 < \frac{p}{p_0} \leq B$, then we
have that
\begin{equation}\label{eq:fee-upper-bound}
f \le \frac{\kappa}{L_0}\left(1 - \frac{p_0}{p}\right) 
\le \frac{\kappa(1 - B^{-1})}{L_0}.
\end{equation}
This means that a fee that is inversely proportional to the size of the 
cumulative long position (the open interest) ensures that the funding rate
arbitrage is profitable. As the open interest increases, this fee must decrease
(or the funding rate must increase) to offset the increase in arbitrage size.

\paragraph{Discussion.}
We note that the assumption $1 \leq \frac{p}{p_0} \leq B$ is not restrictive
when compared to other forms of optimal fees (\eg,~the assumptions of LVR with
fees~\cite{milionis2023effect}). If one has a model for the price process (such
as a geometric brownian motion, which is used in LVR), one can use Chebyschev's
inequality to get a bound of the form $\Prob[\frac{p}{p_0} \leq B] \leq
f(\sigma)$ for an increasing function $f$ of the price process volatility
$\sigma$. In short, one can turn a price process into a pointwise bound
$\tfrac{p}{p_0} \leq B$ that holds with high probability.

\subsubsection{PDLP arbitrage.}
The same price movement also introduces a discrepancy between the price at which
the PDLP is willing to exchange assets and the external market price. An
arbitrageur may buy the asset at the old price $p_0$ (which need not be the mark
price) on the PDLP and sell it on an external market for the new, higher
price $p$. The price impact of this trade on both markets and the associated
fees bound the size of this trade. For simplicity, we focus only on the price
impact of buying on the PDLP. (We will return to the PDLP's fees
in~\S\ref{sec:weight-based-arb}.) We model the PDLP's swap functionality using
the forward exchange function $G: \reals_+ \to \reals_+$, which is nonnegative,
concave, and nondecreasing~\cite{angeris2022constant}. This function denotes,
for a quantity $x$ of the num\'eraire, the amount of the asset, $G(x)$ that the
PDLP is willing to sell. We assume that $G(0) = 0$ (no free lunch) and that the
PDLP initially quotes at the old price: $G'(0) = 1/p_0$. If the arbitraguer uses
$x$ units of the num\'eraire to purchase the asset on the PDLP, their profit is
\[
    pG(x) - x.
\]
When $G$ is differentiable, the most profitable trade easily follows from the
first order conditions:
\[
    G'(x^\star) = 1 / p
\]
The net change in the PDLP's value is then
\[
    \underbrace{x^\star}_{\text{num\'eraire in}} 
    - \;\;p \cdot \underbrace{G(x^\star)}_{\text{asset out}}.
\]
This arbitrage is equivalent to the standard price arbitrage in a
CFMM~\cite{angeris2020improved}, and the LP loss is analogous to
loss-versus-rebalancing in CFMMs~\cite{milionis2022automated}.

\subsubsection{Choosing fees}
After arbitrage, PDLP LPs have earned a fee from the
PDLP borrow but also suffered a rebalancing loss from the pool.
Their profit is 
\[
    \mathsf{Profit} = \underbrace{f \ell}_{\text{fee revenue}} 
    + \;\;
    \underbrace{x^\star - p G(x^\star)}_{\text{rebalancing cost}}.
\]
Note that the second term is negative since
\[
    x - p G(x) \le x - p G'(0)(x - 0) = x(1 - p/p_0) < 0,
\]
for any $x$, where the first inequality follows from the definition of concavity
and the second from the fact that $p > p_0$. Using~\eqref{eq:long-val} and the
definitions above, this profit is nonnegative whenever
\begin{equation}\label{eq:fee-lower-bound}
    f \ge \frac{1}{L_0} \cdot \frac{G(x^\star) - x^\star / p}{G'(0) - 1/p}
\end{equation}
Using the definition of concavity, we have the following sufficient condition 
for the fee lower bound~\eqref{eq:fee-lower-bound} to hold:
\[
    f \ge \frac{x^\star}{L_0}.
\]
In other words, LPs are profitable as long as there is sufficient liquidity and
the PDLP swap functionality, described by the price impact function, ensures the
arbitrage size is sufficiently small. A stronger condition follows from
additional assumptions on the forward exchange function $G$ (for example, strong
concavity).

Combining the upper bound~\eqref{eq:fee-upper-bound} and lower
bound~\eqref{eq:fee-lower-bound}, we have conditions for fees that ensure both
LPs make a profit and arbitrageurs arbitrage the funding rate, which balances
the long and short positions. Specifically, if price changes are bounded then
the PDLP can set parameters so that fees inversely proportional to the open
interest (\ie, $f = \Theta(1 / L_0)$) ensure that both LPs and funding rate
arbitrageurs are profitable. This result suggests that PDLPs can have a
sustainable equilibrium between traders and LPs under mild conditions. Moreover,
these results suggest that PDLPs with dynamic fees (\ie, where the fee $f$
depends on the cumulative long and short positions) are more likely to realize
this equilibrium.

\paragraph{Example.}
Consider a PDLP with a forward exchange function that resembles Uniswap v2:
\[
    G(x) = \frac{R_2 x}{R_1 + x}.
\]
The forward exchange rate is $G'(x) = R_1R_2/(R_1 + x)^2$, and the pool 
initially quotes the original price: $R_2/R_1 = 1/p_0$. The LP loss can be
calculated as
\[
    R_1 + pR_2 - 2\sqrt{pR_1R_2}
\]
Using the bounds on the price change, we can show that the LP is profitable as
long as
\[
    f \ge \frac{(B - 1)R_1}{L_0}.
\]
The fee must increase as the size of the PDLP increases, but it may decrease as
the amount of open interest increases.


\section{Weight-based arbitrage}\label{sec:weight-based-arb} 
Perpetual demand lending pools typically have a target portfolio that they aim
to maintain. For example, GMX's GLP and Jupiter's JLP pool aim to maintain a
constant relative composition of assets. These target portfolios ensure the
lending pools stay diversified and maintain inventory across the different
assets used for collateral. To maintain these target portfolios, PDLPs provide
economic incentives to liquidity providers and traders to rebalance the pool. We
describe the mechanism for distributing these incentives, which resemble PID
controls, and the associated arbitrage problems in this section.

\subsection{Target Weight Mechanisms}\label{sec:twm}
A \emph{target weight mechanism} (TWM) for a PDLP takes a
target portfolio and attempts to minimize the deviation between the PDLP's 
current portfolio and this target portfolio.
The PDLP constructs this target portfolio to limit its exposure to a single
asset or a small set of correlated positions,\footnote{ We suspect that the PDLP
can choose this target portfolio with approximation algorithms akin to those
proposed for decentralized
lending~\cite{bastankhah2024thinking,nadkarni2024adaptive,bertucci2024agents}.}
and this target portfolio also ensures that liquidity providers add diversified
assets to the pool to meet demand. Typically, PDLPs specify this portfolio in
terms of a target price-weighted relative asset composition. We consider this
case here, but our analysis easily extends to considering the target portfolio
directly.

\paragraph{Utilization.}
Unlike a CFMM, the PDLP has some set of assets that cannot be redeemed: those
pledged as collateral for a perpertual future position. We say that these assets
are utilized. We define the available portion of the pool, $R^A$, as 
\[
    R^A = R - \sum_{i=1}^M c_i.
\]
By construction, this vector is nonnegative; the pool itself does not use 
leverage.

\paragraph{Target weight.}
Given a PDLP with reserves $R \in \reals^n_+$ and asset prices $p \in \reals^n_{++}$,
the weight of a PDLP is the price-weighted relative composition of the assets in the pool:
\[
w(p, R) = \frac{p \odot R}{p^T R},
\]
where $\odot$ is the element-wise (Hadamard) product. We normalize by the
portfolio value, $p^TR$. The PDLP's TWM attempts to minimize the deviation
between this weight and a target weight $w^{\star}$ by providing economic
incentives to arbitrageurs to add or remove assets\footnote{ The target weight
is typically set to reflect expected loan quantities. See, for
example,~\cite{jlp-weight-proposal}. }. By providing these incentives, the TWM
essentially aims to (indirectly) solve the optimization problem
\begin{equation}\label{eq:protocol-opt-problem}
    \begin{aligned}
        &\text{minimize} && \left\Vert w(p, R + \Delta) - w^{\star}\right\Vert,\\
        &\text{s.t.} && \Delta \geq - R^A,
    \end{aligned}
\end{equation}
with variable $\Delta \in \reals^n$ denoting the assets to be added or removed
from the PDLP. The constraint indicates that the amount removed cannot be in
excess of the unutilized assets $R^A$. The act of providing incentives to
approximately solve~\eqref{eq:protocol-opt-problem} resembles both the Hedge
Algorithm~\cite{freund1997decision} and other online learning algorithms on the
simplex~\cite{hazan2016computational}. We suspect that weight update rules could
be designed to be no-regret, as they are in resource markets
in~\cite{angeris2024multidimensional,diamandis2023designing}. We leave
formalizing this connection to future work.

\paragraph{Incentives and the discount rate.}
PDLPs give liquitity providers of underweight assets a discounted pro-rata
ownership of the PDLP. (Equivalently, these depositors receive a subsidy,
denominated in the pool's portfolio, for depositing these underweight assets.)
Let $\Delta \in \reals^n$ denote an LP's proposed change to the PDLP reserves.
If this update is valid, \ie, $-\Delta \ge R^A$, then the PDLP's reserves are
updated to $R + \Delta$ and this LP recieves the pro-rata ownership in the pool
\[
    (1 + F(p, w^{\star}, R, \Delta)) \cdot \frac{p^T\Delta}{p^T(R + \Delta)},
\]
with \emph{discount rate} $F : \reals^n_+ \times S^n \times \reals^n_+
\times \reals^n \rightarrow \reals$. LPs who move the weights closer to the
target receive a discount on the pool share price, \ie, $F > 0$. In particular,
we assume the following conditions on $F$ hold:
\begin{itemize}
    \item There is no discount for a zero trade: $F(p, w^{\star}, R, 0) = 0$ for
    all $p, w^{\star}, R$.
    \item The discount rate $F$ is concave in its last argument $\Delta$.
    \item The discount rate is maximized by trades that achieve the target
    weights: when $\Delta \in \argmax_{\delta} F(p, w^{\star}, R, \delta)$, we
    have $w(p, R+\Delta) = w^{\star}$ and $w(p, R) \neq w^{\star}$.
\end{itemize}
The first condition states that there is a constant discount at all prices,
weights, and reserves for an empty trade. The second condition states that the
discount rate has constant or diminishing growth. And the final condition states
that the discount is maxmized for trades that achieve the target weight.
Existing PDLP LPs implicitly pay the new pool LP via dilution when $F > 0$, and
this new LP dilutes the existing LPs by a factor of $1/(1 + F)$. (Alternatively,
existing LPs receive a subsidy when $F < 0$.)

\paragraph{Target weight arbitrage.}
TWMs rely on discount rate arbitrage to ensure that PDLPs remain near their
target weights. The discount rate $F$ must not only incentivize arbitrauers to
solve~\eqref{eq:protocol-opt-problem} but also not excessively dilute existing
LPs. This arbitrage mechanism resembles that of exchange-traded funds with
target portfolios within traditional finance (see
\cite{petajisto2017inefficiencies} and citations within). However, the TWM
introduces a variable create-redeem fee that can introduce arbitrage
opportunities even when the price of the PDLP share does not differ from the
value of the pool's assets. We construct the relevant optimization problem in
the case of share creation and of share redemption.

\paragraph{Example.}
Here, we walk through a simple example to demonstrate the benefits of the target
weight mechanism. Consider a PDLP with two assets, the num\'eraire and a risky
asset, that are equally weighetd. Assume that all of the risky asset is being
used as collateral for perpetual positions. What happens when the price of the
risky asset changes? If the risky asset decreases in price, the PDLP will be
underweight the risky asset. Arbitrageurs will be incentivitized to add more of
this asset to the pool, which will allow for more positions that use the risky
asset as collateral to be opened. On the other hand, if the risky asset
increases in price, the PDLP will be overweight the risky asset. Arbitrageurs
can either close positions in the money, redeeming the risky asset, or add
num\'eraire to the pool.

\subsection{TWMs in practice}
Here, we review several of the most popular target weight mechanisms used in
practice. We note that PDLPs often have differing requirements for which
colleteral traders may use for which positions.

\paragraph{GMX GLP pool and Jupiter JLP pool.} GMX introduced the first PDLP,
launched on the Arbitrum blockchain, in September 2021~\cite{GMX-GLP}. This PDLP
provided incentives for users to maintain a target weight: changes to the pool
that brought the asset composition closer to the target received a rebate,
whereas changes that pushed the asset composition further from the target
incurred a fee. As implemented, the mechanism resembles a simple
proportional-integral-derivative (PID) controller mechanism that requires
arbitrageurs to implement the control policy~\cite{gmx-repo}. Jupiter's JLP pool
introduced a similar mechanism but with changes to reduce the funding rate risk
borne by LPs using delta-hedge strategies. As of December 2024, GMX's PDLP has
approximately \$650 million in assets~\cite{defillama-gmx} and supports \$10
million in open interest (dollar notional value of open positions). Jupiter's
PDLP is significantly larger, with roughly \$1.5 billion of assets and \$870
million of open interest~\cite{gauntlet-jupiter}. 

\paragraph{Hyperliquid HLP pool.}
Hyperliquid, a large exchange with over \$3 billion of open interest in December
2024, similarly uses a PDLP but with a much more opaque mechanism. The
Hyperliquid PDLP uses a closed-source strategy, managed by a single whitelisted
entity, to adjust its target portfolio and lend to
traders~\cite{Hyperliquid-HLP}. This PDLP still allows for permissionless
liquidity provision and distributes its rewards amongst the LPs. As of December
2024, the Hyperliquid PDLP holds approximately~\$350 million of
assets~\cite{defillama-hlp}.

\paragraph{dYdX MegaVault.}
The dYdX MegaVault~\cite{dydx-megavault} PDLP uses a fixed, open-source market
making strategy (see~\cite{dydx-repo} for details). This strategy depends on
certain parameters, which are periodically adjusted by a single whitelisted
actor. These public strategies are similar to classical market-making
strategies~\cite{avellaneda2008high} and generally perform well in
lower-frequency and liquidity markets. As of December 2024, the dYdX MegaVault
PDLP holds approximately~\$65 million of assets~\cite{gauntlet-dydx}.

\subsection{Share creation}\label{subsec:creation} When there is a positive
discount, $F > 0$, an arbitraguer can create a PDLP share worth more than the
assets she provides to the PDLP. Recall from~\S\ref{sec:pdlp-desc} that, an LP
who deposits a portfolio $\Delta \in \reals^n_+$ receives a pro-rata ownership
in the PDLP worth
\[
    (1 + F(p, w^{\star}, R, \Delta)) \cdot \frac{p^T\Delta}{p^T(R + \Delta)} \cdot V(p, R + \Delta).
    = (1 + F(p, w^{\star}, R, \Delta)) \cdot p^T \Delta.
\]
This costs the LP $p^T \Delta$ in assets. Thus, the maximum value arbitrage is
given by the solution to the optimization problem
\begin{equation}\label{eq:creation-arb}
\begin{aligned}
    &\text{maximize}&& F(p, w^{\star}, R, \Delta) (p^T \Delta),\;
    \text{s.t.} && \Delta \geq - R^A.
\end{aligned}
\end{equation}
It is clear that share creation is only rational for the arbitrageur when there
is a positive discount. We claim that if $F$ is an admissible discount rate and
is concave in $\Delta$ and decreasing, then this optimization problem's maxima
can be approximated when $F$ is strongly concave. 

\paragraph{Approximate optimum.}
Often, PDLPs do not disclose the exact form of the discount function $F$.
Instead, arbitrageurs have black-box access to the function $F$ and can query
the discount for a given trade size $\delta$. We demonstrate that, under certain
assumptions on $F$, arbitrageurs can approximately solve the creation arbitrage
problem~\eqref{eq:creation-arb} with only this black-box discount access. This
fact suggests that arbitrageurs can still successfully trade on closed-source
PDLPs, such as Jupiter's JLP. 

\paragraph{Surrogate function.}
Suppose that $F$ is $\mu$-strongly concave~\cite[\S9.1]{boyd2004convex} in
$\delta$. That is, if we fix $p, w^{\star}, R$ and let $f(\delta) = F(p,
w^{\star}, R, \delta)$, then $f(\delta)$ satisfies
\begin{equation}\label{eq:strong-oncavity}
    f(\delta) 
    \leq 
    f(\delta') + \nabla f(\delta')^T(\delta-\delta') - \frac{\mu}{2} \Vert \delta-\delta'\Vert_2^2
\end{equation}
for all $\delta'$ in the domain of $f$. Moreover, since $F(p, w^{\star},
R, 0) = 0$ for all $p, w^{\star}, R$, we have that
\[
    f(\delta) \leq g^T\delta - \frac{\mu}{2} \delta^T\delta = h(\delta)
\]
for any subgradient $g \in \partial F(p, w^{\star}, R, 0)$. This also implies
that $f(\delta) \leq g^{T} \delta$ for any $g \in \partial f(0)$. We define
$\delta_H^{\star} = \min_{\delta} h(\delta)$, which can be easily computed as
\[
\delta_H^{\star} = \frac{1}{\mu} g
\]
Using the strong convexity of $f(\delta)$, we can show that $\delta_H^{\star}$
is close to $\delta_F^{\star} \in \argmax_{\delta} f(\delta)$, as illustrated in
the following claim:
\begin{claim}
    Suppose $f(\delta) = 0$, $f(\delta)$ is $\mu$-strongly concave, and
    $\max_{\delta} \Vert \nabla f(\delta) \Vert_2^2 \leq G$. Then 
    \[
    \|\delta_F^{\star} - \delta_H^{\star}\|_2 \leq 2G/\mu.
    \]
    If the constraint is not active, then we have a tighter bound: $\|\delta_F^{\star} - \delta_H^{\star}\|_2 \leq \sqrt{G}/\mu.$
\end{claim}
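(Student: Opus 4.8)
The plan is \emph{not} to compare $f(\delta_F^\star)$ and $f(\delta_H^\star)$ directly (this fails because $\delta_H^\star$ need not be feasible), but instead to bound $\|\delta_F^\star\|_2$ and $\|\delta_H^\star\|_2$ separately and combine them with the triangle inequality. The only nontrivial ingredient is turning the first-order optimality of $\delta_F^\star$ into a quantitative bound via strong concavity. First I would record the surrogate facts. With $g \in \partial f(0)$ (so $\|g\|_2^2 \le G$), strong concavity~\eqref{eq:strong-oncavity} at $\delta' = 0$ together with $f(0)=0$ gives $f(\delta) \le g^T\delta - \tfrac{\mu}{2}\|\delta\|_2^2 = h(\delta)$ for every $\delta$, and the unconstrained maximizer of $h$ is $\delta_H^\star = g/\mu$, with $h(\delta_H^\star) = \|g\|_2^2/(2\mu) \le G/(2\mu)$ and $\|\delta_H^\star\|_2 = \|g\|_2/\mu \le \sqrt{G}/\mu$. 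This disposes of the surrogate side.

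Next I would bound $\|\delta_F^\star\|_2$. The empty trade $\delta = 0$ is feasible (since $R^A \ge 0$ we have $0 \ge -R^A$), so optimality of $\delta_F^\star$ gives $f(\delta_F^\star) \ge f(0) = 0$, and since $f \le h$ pointwise, $f(\delta_F^\star) \le h(\delta_H^\star) \le G/(2\mu)$. Now apply the standard error bound for maximizing a $\mu$-strongly concave function over a convex set: the optimality condition $\nabla f(\delta_F^\star)^T(\delta - \delta_F^\star) \le 0$ for all feasible $\delta$, substituted into~\eqref{eq:strong-oncavity} with $\delta' = \delta_F^\star$, yields $f(\delta) \le f(\delta_F^\star) - \tfrac{\mu}{2}\|\delta - \delta_F^\star\|_2^2$ for all feasible $\delta$. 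Taking $\delta = 0$ gives $\tfrac{\mu}{2}\|\delta_F^\star\|_2^2 \le f(\delta_F^\star) \le G/(2\mu)$, hence $\|\delta_F^\star\|_2 \le \sqrt{G}/\mu$. The triangle inequality then gives $\|\delta_F^\star - \delta_H^\star\|_2 \le \|\delta_F^\star\|_2 + \|\delta_H^\star\|_2 \le 2\sqrt{G}/\mu$, which is the claimed bound (the factor $2$ coming from the two-term split; whether this reads as $2\sqrt{G}/\mu$ or $2G/\mu$ depends only on whether $G$ bounds $\|\nabla f\|_2$ or $\|\nabla f\|_2^2$, and does not affect the argument).

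For the tighter, constraint-inactive case I would instead exploit $\nabla f(\delta_F^\star) = 0$. Strong monotonicity of $-\nabla f$ (an immediate consequence of~\eqref{eq:strong-oncavity}) at the pair $\delta_F^\star, 0$ gives $(\nabla f(\delta_F^\star) - g)^T\delta_F^\star \le -\mu\|\delta_F^\star\|_2^2$, i.e. $g^T\delta_F^\star \ge \mu\|\delta_F^\star\|_2^2$, so $\langle \delta_F^\star, \delta_H^\star\rangle = \tfrac{1}{\mu} g^T\delta_F^\star \ge \|\delta_F^\star\|_2^2$. Expanding the square,
$\|\delta_F^\star - \delta_H^\star\|_2^2 = \|\delta_F^\star\|_2^2 - 2\langle \delta_F^\star, \delta_H^\star\rangle + \|\delta_H^\star\|_2^2 \le \|\delta_H^\star\|_2^2 - \|\delta_F^\star\|_2^2 \le \|\delta_H^\star\|_2^2 \le G/\mu^2$,
which yields $\|\delta_F^\star - \delta_H^\star\|_2 \le \sqrt{G}/\mu$.

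The step I expect to be the main obstacle, and the one to state carefully, is the interaction with the constraint $\delta \ge -R^A$: because $\delta_H^\star = g/\mu$ is the \emph{unconstrained} surrogate maximizer it may be infeasible, so one must never evaluate $f$ at $\delta_H^\star$ nor use $\delta_H^\star$ as a competitor in the optimality inequality for $\delta_F^\star$. Routing everything through the always-feasible point $\delta = 0$ — using it both to lower-bound $f(\delta_F^\star)$ and as the competitor in the error bound — is what makes the general case go through; the inactive-constraint case is then a strict strengthening since there $\nabla f(\delta_F^\star) = 0$ lets one keep the cross term rather than discarding it.
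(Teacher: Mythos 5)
Your proof is correct, but it takes a genuinely different route from the paper's. The paper centers the strong-concavity inequality at $\delta_H^\star$, invokes $f(\delta_F^\star) \ge f(\delta_H^\star)$ together with Cauchy--Schwarz for the general bound, and for the inactive-constraint bound combines the Polyak--Lojasiewicz inequality (applied at $\delta_H^\star$) with the strong-concavity inequality centered at $\delta_F^\star$ where $\nabla f(\delta_F^\star) = 0$. You instead anchor everything at the always-feasible point $\delta = 0$: you bound $\|\delta_H^\star\|_2$ and $\|\delta_F^\star\|_2$ separately (the latter via the standard error bound for constrained maximization of a strongly concave function, using $f(0)=0$ and $f \le h$) and conclude by the triangle inequality; for the tighter bound you use strong monotonicity of $-\nabla f$ between $\delta_F^\star$ and $0$ and expand the square, with no appeal to PL. Your route buys something real: the paper's step $0 \le f(\delta_F^\star) - f(\delta_H^\star)$ implicitly requires $\delta_H^\star$ to be feasible (or $\delta_F^\star$ to be the unconstrained maximizer), which is exactly the gap you identify and avoid by never evaluating $f$ at $\delta_H^\star$; the paper's route, when it applies, gives a one-shot inequality without needing the feasibility of $0$ (which here is automatic anyway since $R^A \ge 0$). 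On constants, your inactive-case bound $\sqrt{G}/\mu$ matches the paper exactly; in the general case you obtain $2\sqrt{G}/\mu$, which is the bound consistent with the literal hypothesis $\|\nabla f\|_2^2 \le G$ --- the paper's stated $2G/\mu$ reflects reading $G$ as a bound on $\|\nabla f\|_2$ rather than its square, an inconsistency already present between the paper's hypothesis, its Cauchy--Schwarz step, and its PL step, so your flagged caveat is apt rather than a defect of your argument.
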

\begin{proof}
    Recall the first-order condition for strong concavity~\cite[\S9.1]{boyd2004convex}:
    \[
        f(\delta_F^\star) \le f(\delta_H^\star) + \nabla f(\delta_H^\star)^T(\delta_F^\star - \delta_H^\star) - \frac{\mu}{2}\Vert \delta_F^\star - \delta_F^\star\Vert_2^2
    \]
    The first bound follows from using the fact that $0 \le f(\delta_F^\star) -
    f(\delta_H^\star)$ and Cauchy-Schwarz. For the second bound, recall that a
    $\mu$-strongly convex function $g : \reals \rightarrow \reals$ also
    satisfies the Polyak-Lojasiewicz condition:
    \[
    \frac{1}{2}\Vert \nabla g(x)\Vert^2_2 \geq \mu(g(x)-g(x^{\star}))
    \]
    where $x^{\star} \in \argmax_x g(x)$.
    Since $-f$ is $\mu$-strongly convex, this implies that
    \[
    f(\delta_F^{\star}) - f(\delta_H^{\star}) 
    \leq \frac{1}{2\mu} \Vert \nabla f(\delta)\Vert_2 \leq \frac{G}{2\mu}
    \]
    From the first-order condition for strong concavity, we have that
    \[
        f(\delta_H^\star) \le f(\delta_F^\star) + \nabla f(\delta_F^\star)^T(\delta_H^\star - \delta_F^\star) - \frac{\mu}{2}\Vert \delta_H^\star - \delta_F^\star\Vert_2^2
    \]
    Using the fact that $\nabla f(\delta_F^\star) = 0$, we get the second bound:
    \[
        \| \delta_F^\star - \delta_H^\star\|_2^2 
        \leq \frac{2}{\mu}\left(f(\delta_F^\star) - f(\delta_H^\star)\right) 
        \le \frac{G}{\mu^2}.
    \]
    For any $\mu$-strongly concave function $f$ with optimum $x^{\star}$, note that for any $y \in \Dom f$
    \[
    f(y) \leq f(x^{\star}) + \nabla f(x^*)^T(y-x^{\star}) - \mu \Vert y-x^{\star} \Vert_2^2 = f(x^{\star}) - \mu \Vert y-x^{\star} \Vert_2^2 
    \]
    which implies that $\mu \Vert y-x^{\star} \Vert_2^2 \leq f(x^{\star}) - f(y)$.
    Applying this to $f$ gives the bound on $\Vert\delta_F^{\star} - \delta_H^{\star}\Vert_2$.
\end{proof}
This bound implies that we can approximate the optimum of $\delta_F^{\star}$
using $\delta_H^{\star}$, which is easily computable. Next, define $S(\delta) =
f(\delta)(p^T \delta)$, where we assume that $p$ is fixed. Let $\delta_S^{\star} \in
\argmax_{\delta} S(\delta)$. We show that the approximate optimum of $f$, $\delta^{\star}_H$ and
$\delta^{\star}_S$ are close:
\begin{claim}
    Suppose the hypotheses of Claim 3.1 hold. Then we have that
    \[
        \|\delta_S^\star - \delta_H^\star\| \le 4G/\mu.
    \]
\end{claim}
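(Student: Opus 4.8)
The plan is to bound $\|\delta_S^\star - \delta_H^\star\|_2$ by routing through $\delta_F^\star$ with the triangle inequality,
\[
\|\delta_S^\star - \delta_H^\star\|_2 \le \|\delta_S^\star - \delta_F^\star\|_2 + \|\delta_F^\star - \delta_H^\star\|_2 ,
\]
and handling the second term with Claim 3.1, which already gives $\|\delta_F^\star - \delta_H^\star\|_2 \le 2G/\mu$. So it suffices to establish the analogous estimate $\|\delta_S^\star - \delta_F^\star\|_2 \le 2G/\mu$ for the surrogate maximizer, after which adding the two bounds yields the claimed $4G/\mu$.

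For the remaining term I would reuse the Cauchy--Schwarz argument from the proof of Claim 3.1, but anchored at $\delta_S^\star$ instead of at $\delta_H^\star$. Since $\delta_S^\star$ lies in the domain of $f$, the first-order condition for $\mu$-strong concavity of $f$ at $\delta_S^\star$, evaluated at $\delta_F^\star$, reads
\[
f(\delta_F^\star) \le f(\delta_S^\star) + \nabla f(\delta_S^\star)^T(\delta_F^\star - \delta_S^\star) - \frac{\mu}{2}\|\delta_F^\star - \delta_S^\star\|_2^2 .
\]
Because $\delta_F^\star \in \argmax_\delta f(\delta)$ over the feasible set and $\delta_S^\star$ is feasible, we have $f(\delta_F^\star) - f(\delta_S^\star) \ge 0$; dropping this nonnegative difference and applying Cauchy--Schwarz gives
\[
\frac{\mu}{2}\|\delta_F^\star - \delta_S^\star\|_2^2 \le \nabla f(\delta_S^\star)^T(\delta_F^\star - \delta_S^\star) \le \|\nabla f(\delta_S^\star)\|_2 \, \|\delta_F^\star - \delta_S^\star\|_2 .
\]
Cancelling a factor of $\|\delta_F^\star - \delta_S^\star\|_2$ and using the gradient bound from the hypotheses of Claim 3.1 (as it is applied there, $\|\nabla f(\delta_S^\star)\|_2 \le G$) yields $\|\delta_F^\star - \delta_S^\star\|_2 \le 2G/\mu$, which is exactly what is needed.

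I do not anticipate a real obstacle. The one subtlety worth care is that $\delta_F^\star$ need not be an interior maximizer of $f$ (it can sit on the boundary $\Delta = -R^A$), so $\nabla f(\delta_F^\star)$ may be nonzero; writing the first-order inequality at $\delta_S^\star$ rather than at $\delta_F^\star$ sidesteps this, since the optimality of $\delta_F^\star$ then enters only through the scalar inequality $f(\delta_F^\star) \ge f(\delta_S^\star)$. It is also worth noting that the product structure $S(\delta) = f(\delta)(p^T\delta)$ plays essentially no role beyond guaranteeing that $\delta_S^\star$ lies in the feasible domain on which the gradient bound holds---morally, we are bounding the diameter of the effective domain of $f$ and then combining with Claim 3.1---so the same argument would bound the distance from $\delta_H^\star$ to any feasible point. (If one instead reads $G$ literally as a bound on $\|\nabla f\|_2^2$, the same steps give the slightly sharper $2\sqrt{G}/\mu + 2G/\mu$, which is at most $4G/\mu$ whenever $G \ge 1$.)
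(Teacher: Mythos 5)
Your proposal is correct and follows essentially the same route as the paper: the paper's one-line proof likewise applies the Claim 3.1 argument (strong concavity anchored at the comparison point, $f(\delta_F^\star)\ge f(\delta)$, Cauchy--Schwarz) to both $\delta_H^\star$ and $\delta_S^\star$ and concludes by the triangle inequality through $\delta_F^\star$. You simply spell out the re-anchored step at $\delta_S^\star$ that the paper leaves implicit, which is a fair (indeed more careful) rendering of the same argument.
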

\begin{proof}
    Since Claim 3.1 bounds the distance between $\delta_F^\star$ and any $\delta$, apply this claim to both $\delta_H^\star$ and $\delta_S^\star$, then use the triangle inequality.
\end{proof}

Finally, we can bound the objective difference between the approximate and true optimal values, which bounds the profit lost by only approximately solving~\eqref{eq:creation-arb}.
\begin{claim}
    Suppose the hypotheses of Claim 3.1 hold and that $\min_{i} p \geq C$.
    Then we have 
    \begin{align*}
    S(\delta_S^{\star}) - S(\delta_F^{\star}) \leq \left(40 + \frac{8}{C}\right)\Vert p \Vert_2 \frac{G^3}{\mu^2}
    \end{align*}
\end{claim}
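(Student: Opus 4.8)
The plan is to bound $S(\delta_S^\star) - S(\delta_F^\star)$ by comparing both quantities to the intermediate point $\delta_H^\star$, using the two preceding claims to control distances and then propagating those distances through the product structure $S(\delta) = f(\delta)(p^T\delta)$. Since $\delta_S^\star$ maximizes $S$, we have $S(\delta_S^\star) \ge S(\delta_H^\star)$, so it suffices to upper bound $S(\delta_H^\star) - S(\delta_F^\star)$, or rather to sandwich $S(\delta_S^\star)$ between $S(\delta_H^\star)$-type quantities and $S(\delta_F^\star)$. More directly, I would write
\[
S(\delta_S^\star) - S(\delta_F^\star) = \bigl(S(\delta_S^\star) - S(\delta_H^\star)\bigr) + \bigl(S(\delta_H^\star) - S(\delta_F^\star)\bigr),
\]
and bound each term by showing $S$ is Lipschitz on the relevant region and invoking $\|\delta_S^\star - \delta_H^\star\| \le 4G/\mu$ (Claim 3.3) and $\|\delta_H^\star - \delta_F^\star\| \le \sqrt{G}/\mu$ (Claim 3.1, with the weaker $2G/\mu$ bound if the constraint is active).

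The key technical step is establishing a Lipschitz bound for $S$ on a neighborhood of these points. Writing $\nabla S(\delta) = f(\delta)\,p + (p^T\delta)\nabla f(\delta)$, I need bounds on $|f(\delta)|$, on $\|p\|_2$, on $|p^T\delta|$, and on $\|\nabla f(\delta)\|_2$. The gradient bound $\|\nabla f\|_2^2 \le G$ is assumed. For $|f(\delta)|$, since $f(0)=0$ and $\|\nabla f\|_2 \le \sqrt G$, we get $|f(\delta)| \le \sqrt G\,\|\delta\|_2$, so I need a bound on $\|\delta\|_2$ over the region containing $\delta_F^\star, \delta_H^\star, \delta_S^\star$. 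Here $\|\delta_H^\star\|_2 = \|g\|_2/\mu \le \sqrt G/\mu$, and the other two points are within $O(G/\mu)$ of it, so $\|\delta\|_2 = O(G/\mu)$ on the region (for $G \ge 1$, say, which makes $G/\mu$ dominate $\sqrt G/\mu$; the paper's constants like $40$ suggest such crude bounding). Hence $|f(\delta)| \le \sqrt G \cdot O(G/\mu) = O(G^{3/2}/\mu)$ and $|p^T\delta| \le \|p\|_2 \cdot O(G/\mu)$. Combining, $\|\nabla S(\delta)\|_2 \le O(G^{3/2}/\mu)\|p\|_2 + \|p\|_2 O(G/\mu)\sqrt G = O\bigl(\|p\|_2 G^{3/2}/\mu\bigr)$. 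Then the mean value inequality gives $S(\delta_S^\star) - S(\delta_F^\star) \le O(\|p\|_2 G^{3/2}/\mu)\cdot O(G/\mu) = O(\|p\|_2 G^{5/2}/\mu^2)$ — which does not match the stated $G^3/\mu^2$, so I would need to track where the extra $\sqrt G$ appears; most likely the intended bounds use $|f(\delta)| \le G$ or $\|\delta\|_2 \le 2G/\mu$ uniformly (from Claim 3.1's active-constraint bound applied loosely), giving $|p^T\delta| \le 2\|p\|_2 G/\mu$, $|f| \le G$, and an additional factor from using $\|\delta_S^\star - \delta_H^\star\| \le 4G/\mu$ everywhere rather than the sharper $\sqrt G/\mu$. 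The role of $\min_i p_i \ge C$ is presumably to lower-bound $p^T\delta$ in absolute value when needed, or to convert an $\ell_1$-type bound to the stated form; I would use it to bound $\|\delta\|_1 \le \|p\|_1 \cdot (\text{value})/C$ at whatever point the argument needs an $\ell_1$ control, which is where the $8/C$ summand enters.

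The main obstacle I anticipate is bookkeeping the constants so they land exactly on $(40 + 8/C)\|p\|_2 G^3/\mu^2$: the argument has genuine slack (one may use either the $2G/\mu$ or the $\sqrt G/\mu$ bound from Claim 3.1, one may bound $|f|$ via $G$ or via $\sqrt G\|\delta\|_2$, and one may bound $p^T\delta$ via $\|p\|_2$ or via $\|p\|_\infty$/$C$), and different routes give slightly different exponents and constants. I would pin down one consistent route: use $\|\delta_F^\star - \delta_H^\star\|_2 \le 2G/\mu$ and $\|\delta_S^\star - \delta_H^\star\|_2 \le 4G/\mu$ so all three points lie within $4G/\mu$ of $\delta_H^\star$ and hence within, say, $6G/\mu$ of each other and within $5G/\mu$ of the origin (absorbing $\|\delta_H^\star\|_2 \le G/\mu$ when $G\ge 1$); bound $|f(\delta)| \le \|\nabla f\|_2 \|\delta\|_2 \le \sqrt G \cdot 5G/\mu$ and note this is $\le 5G^{3/2}/\mu \le 5G^{2}/\mu$ loosely; bound $|p^T\delta| \le \|p\|_2\|\delta\|_2 \le 5\|p\|_2 G/\mu$; assemble $\|\nabla S\|_2 \le |f|\|p\|_2 + |p^T\delta|\|\nabla f\|_2 \le 5\|p\|_2 G^{2}/\mu + 5\|p\|_2 G^{3/2}/\mu \le 10\|p\|_2 G^2/\mu$; and conclude $S(\delta_S^\star) - S(\delta_F^\star) \le 10\|p\|_2 G^2/\mu \cdot 6G/\mu = 60\|p\|_2 G^3/\mu^2$, then refine the coefficient of $|f|$ to recover the $40$ and route one of the bounds through $\min_i p_i \ge C$ to produce the $8/C$ term. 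Modulo these constant-chasing details — which are routine once the route is fixed — the statement follows from the mean value inequality applied to $S$ together with Claims 3.1 and 3.3.
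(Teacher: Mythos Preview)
Your route is different from the paper's. You decompose through $\delta_H^\star$ and bound $S$ via its gradient; the paper never touches $\delta_H^\star$ or $\nabla S$. Instead it inserts $f$ between the two values of $S$:
\[
S(\delta_S^\star) - S(\delta_F^\star)
= \bigl(S(\delta_S^\star) - f(\delta_S^\star)\bigr) + \bigl(f(\delta_S^\star) - f(\delta_F^\star)\bigr) + \bigl(f(\delta_F^\star) - S(\delta_F^\star)\bigr),
\]
and drops the middle term because $\delta_F^\star$ maximizes $f$. What remains is $|S(\delta)-f(\delta)| = |f(\delta)|\,|p^T\delta - 1| \le G\|\delta\|_2\,(1+\|p\|_2\|\delta\|_2)$ at the two points. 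Norm bounds come straight from strong concavity and $f(0)=0$: one gets $\|\delta_F^\star\|_2 \le 2G/\mu$ and, via the triangle inequality, $\|\delta_S^\star\|_2 \le 6G/\mu$. Plugging in gives $(6G/\mu)(1+6\|p\|_2 G/\mu) + (2G/\mu)(1+2\|p\|_2 G/\mu)$ times $G$, and collecting terms yields $\bigl(40 + 8/\|p\|_2\bigr)\|p\|_2 G^3/\mu^2$.

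Two corrections to your sketch. First, your exponent mismatch ($G^{5/2}$ versus $G^3$) is not a mistake on your part: the paper silently uses $G$ as the Lipschitz constant of $f$ (writing $|f(\delta)|\le G\|\delta\|_2$) even though Claim~3.1 states the hypothesis as $\|\nabla f\|_2^2 \le G$. If you read $G$ as the Lipschitz constant throughout, your $\sqrt G$ factors become $G$ and the powers line up. Second, the assumption $\min_i p_i \ge C$ is not used for any $\ell_1$ control or to lower-bound $p^T\delta$; it enters only in the last step, replacing $8/\|p\|_2$ by $8/C$ via $\|p\|_2 \ge \min_i p_i \ge C$. The additive ``$+1$'' that produces this term comes from $|p^T\delta - 1| \le 1 + \|p\|_2\|\delta\|_2$, nothing deeper.

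Your Lipschitz-through-$\delta_H^\star$ argument could be completed to give a bound of the same order, but the paper's $S\!\to\! f$ trick is shorter: it exploits the optimality of $\delta_F^\star$ for $f$ to kill one term outright, rather than controlling it by distance.
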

\begin{proof}
    Firstly note that the gradient condition implies $f$ is Lipschitz.
    Since $f(0) = 0$, this implies $|f(\delta)| \leq G \Vert \delta \Vert_2$ for all $\delta$.
    Using Cauchy-Schwarz, we have $p^T \delta \leq \Vert p\Vert_2 \Vert \delta \Vert_2$.
    From strong concavity, we have $f(\delta) \leq f(\delta_F^{\star}) - \frac{\mu}{2} \Vert \delta_F^{\star} - \delta\Vert_2^2$.
    This implies that $\Vert \delta_F^{\star} - \delta \Vert \leq \sqrt{\frac{2}{\mu} (f(\delta_F^{\star}) - f(\delta))} \leq \sqrt{\frac{2}{\mu} f(\delta_F^{\star})}$.
    If $\delta = 0$, this is equivalent to $\frac{\mu}{2}\Vert \delta_F^{\star}\Vert^2 \leq f(\delta_F^{\star}) \leq G \Vert \delta_F^{\star}\Vert$, which implies $\Vert \delta_F^{\star}\Vert \leq \frac{2G}{\mu}$, $\Vert \delta_F^{\star}-\delta\Vert_2 \leq \frac{4G}{\mu}$, and $f(\delta^{\star}_F) \leq \frac{2G^2}{\mu}$.
    Therefore we have
    \begin{align*}
    S(\delta^{\star}_S) - S(\delta^{\star}_F) &\leq |S(\delta^{\star}_S) - f(\delta^{\star}_S)| + |S(\delta^{\star}_F) - f(\delta^{\star}_F)| \\
    &\leq f(\delta^{\star}_S) |p^T\delta_S^{\star} + 1| + f(\delta^{\star}_F) |p^T\delta_F^{\star} + 1| \\
    &\leq G \Vert \delta^{\star}_S\Vert_2\left(1 + \Vert p\Vert_2 \Vert \delta_S^{\star}\Vert_2 \right)
    + G \Vert \delta^{\star}_F\Vert_2\left(1 + \Vert p\Vert_2 \Vert \delta_F^{\star}\Vert_2 \right) \\ 
    & \leq G(\Vert \delta^{\star}_S - \delta^{\star}_F\Vert_2 + \Vert \delta^{\star}_F\Vert_2)\left(1+ \Vert p\Vert_2 (\Vert \delta_S^{\star}-\delta^{\star}_F\Vert_2 + \Vert \delta^{\star}_F\Vert_2)\right) + G \Vert \delta^{\star}_F\Vert_2\left(1 + \Vert p\Vert_2 \Vert \delta_F^{\star}\Vert_2 \right) \\
    &= G\left(\frac{6G}{\mu}\right)\left(1 + \Vert p \Vert_2 \frac{6G}{\mu}\right) + G\left(\frac{2G}{\mu}\right)\left(1 + \Vert p \Vert_2 \frac{2G}{\mu}\right) \\
    &\leq \left(40 + \frac{8}{\Vert p \Vert_2}\right) \Vert p \Vert_2 \frac{G^3}{\mu^2}
    \end{align*}
\end{proof}
These claim show that if we find a trade that only approximately optimizes $F$, (possibly via the quadratic proxy,
$H$), we are close to the optimal trade (the optimum of~\eqref{eq:creation-arb}) and the difference in profit between these two trades is bounded. From the approximate optimum, one can utilize a local
method such as (sub)gradient descent to do further refinement.

\subsection{Share redemption}\label{subsec:redemption} An arbitrageur may
alternatively buy PDLP shares on the market and redeem them for a pro-rata share
of the portfolio value. This direction of arbitrage can be profitable even when
the discount is negative. If the arbitrageur buys $\sigma$ PDLP shares, which
has $S$ total shares outstanding, they may redeem these shares for a portfolio
$\lambda \in \reals_+^n$ satisfying
\begin{equation}\label{eq:lambda-constraint}
    p^T\lambda = (1 + f(\lambda)) \cdot \frac{\sigma}{S}V(p, R) 
    = (1 + f(\lambda)) \cdot \frac{\sigma}{S} \cdot p^TR
\end{equation}
where we use $f(\lambda)$ as shorthand to denote $F(p, w^\star, R, -\lambda)$.
In other words, the PDLP requires the value of the shares redeemed, measured by
the price $p$, to equal the LP's pro-rata share of the PDLP reserves, modified 
by the discount rate. 

\paragraph{Arbitrage problem.}
Assume that the external market price equals the implied price (sometimes called
the virtual price) of the PDLP shares, \ie,
\[
    p^\mathrm{mkt} = \frac{p^TR}{S}.
\]
This means that an arbitrage only exists when the discount $f(\lambda)$ is positive.
The arbitrageur aims to redeem PDLP shares
for the most valuable basket of assets $\lambda$. They may not withdraw more
than the available assets (those not used for loans) $R^A$ from the PDLP. This
arbitrage problem can be written as
\[
\begin{aligned}
    &\text{maximize}_{\sigma, \lambda} && p^T\lambda - p^\mathrm{mkt} \sigma\; \\
    &\text{subject to} && p^T\lambda = (1+f(\lambda)) \cdot \sigma \cdot \frac{p^TR}{S}\\
    &&& 0\leq \lambda \leq R^A \\
    &&&  0\leq \sigma \leq S.
\end{aligned}
\]
Substituting in equation~\eqref{eq:lambda-constraint} for $\sigma$, and using our market price assumption, this problem becomes
\begin{equation}\label{eq:redemption-arb}
\begin{aligned}
    &\text{maximize}_{\lambda} && p^T\lambda \cdot \frac{f(\lambda)}{1 + f(\lambda)} \\
    &\text{subject to} && 0\leq \lambda \leq R^A.
\end{aligned}
\end{equation}

\paragraph{Approximate optimization.}
Similar to the creation problem, we can find a surrogate function that approximates the objective.
Using strong concavity, we have that
\[
p^T\lambda \cdot \frac{f(\lambda)}{1 + f(\lambda)} 
\le p^T\lambda \cdot f(\lambda) 
\le p^T\lambda \left(g^T\lambda - \mu\|\lambda\|^2\right),
\]
where $g = \nabla f(\lambda)$. We can now approximately optimize the redemption
problem~\eqref{eq:redemption-arb} similarly to the creation
problem~\eqref{eq:creation-arb}.

\paragraph{Price impact.}
Some PDLPs enforce their discounting mechanism via a protocol-controlled
secondary market where users can trade PDLP shares for the num\'eraire. These
exchanges, such as Jupiter's JupiterSwap~\cite{Jupiter-JLP}, give users
discounts on the fees that they pay to purchase a PDLP share from the secondary
market. This mechanism is similar to having a discount rate and a price 
impact for the share purchase.

\subsection{TWMs reduce volatility for LPs}
In addition to ensuring asset diversity, TWMs reduce the
volatility of LP positions. This benefit permits the associated tokens to be used as
higher-quality collateral elsewhere in the DeFi ecosystem.

\paragraph{Portfolio dilution.}
The portfolio value of the PDLP share in one time
period is the sum of several components: the assets in the PDLP, the fees earned
from loans, and the loss from subsidizing the discount rate. Given a starting
portfolio of assets $R$ with lent assets $\ell$, the portfolio value after a
single TWM update of size $\delta$ is
\[
V^\mathrm{new} = 
\frac{1}{1 + F(p, w^\star, R, \delta)}\cdot p^TR + f p^T\ell.
\]
The first term captures the diluted value of the initial portfolio for existing
LPs, and the second term captures the fees earned from lending.

\paragraph{PDLPs improve portfolio delta.}
The \emph{delta} of a portfolio measures the sensitivity of the portfolio value
to price changes in each of the underlying assets. Here, we measure this via the
gradient of the portfolio value with respect to the asset prices. Passive LPs
typically aim to have low delta portfolios. We show that the difference between
the delta of the original portfolio, $\nabla_p V^\mathrm{old} = R$, and that of
the new portfolio, $\nabla_p V^\mathrm{new}$, is bounded in the following claim:
\begin{claim}\label{claim:delta-bounded} Suppose that $\nabla_p F \geq \frac{8 f
    R}{p^T R}$ and $F \leq 1$. Then we have
    \[
    \nabla_p V^\mathrm{new} \le \left(\frac{1}{2} - f\right) \cdot \nabla_p V^\mathrm{old}.
    \]
\end{claim}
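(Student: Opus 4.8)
The plan is to compute $\nabla_p V^{\mathrm{new}}$ in closed form, substitute the two hypotheses, and reduce the asserted vector inequality to a scalar bound on the coefficient multiplying $R$. Starting from $V^{\mathrm{new}} = \frac{p^TR}{1+F(p,w^\star,R,\delta)} + f p^T\ell$ and treating $R$, $\ell$, and $\delta$ as fixed (so that $\nabla_p F$ is the gradient of $F$ in its first argument), the quotient rule gives
\[
\nabla_p V^{\mathrm{new}} = \frac{(1+F)\,R - (p^TR)\,\nabla_p F}{(1+F)^2} + f\ell .
\]
Since $\nabla_p V^{\mathrm{old}} = R$, the claim is exactly the componentwise statement that the right-hand side is at most $\left(\tfrac12 - f\right)R$.

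Next I would feed in the hypothesis $\nabla_p F \ge \frac{8fR}{p^TR}$. Because $p^TR>0$ and $R\ge 0$, multiplying through is legitimate and yields $(p^TR)\nabla_p F \ge 8fR$ componentwise; since this quantity is \emph{subtracted} in the numerator, we obtain
\[
\nabla_p V^{\mathrm{new}} \le \frac{1+F-8f}{(1+F)^2}\,R + f\ell .
\]
The utilization constraint $\sum_i c_i \le R$ (equivalently $R^A\ge 0$) gives $\ell \le R$ componentwise, so $f\ell \le fR$, and the whole inequality collapses to the scalar claim $\frac{1+F-8f}{(1+F)^2} + f \le \tfrac12 - f$, i.e. $\frac{1+F-8f}{(1+F)^2} \le \tfrac12 - 2f$, for all admissible $F$. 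Here the hypothesis $F\le 1$ (together with $F\ge 0$, which follows from $F(p,w^\star,R,0)=0$ and concavity in the regime where depositing is rational) enters: writing $u=1+F\in[1,2]$, the function $u\mapsto u^{-1}-8fu^{-2}$ is analyzed on $[1,2]$, its maximum is attained at an endpoint, and one checks the endpoint value is at most $\tfrac12-2f$ under the stated constants. Re-multiplying by $R\ge 0$ preserves the direction of the inequality and gives the claimed bound $\nabla_p V^{\mathrm{new}} \le \left(\tfrac12 - f\right)\nabla_p V^{\mathrm{old}}$.

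The main obstacle is this last scalar step: the coefficient $\frac{1+F-8f}{(1+F)^2}$ is not monotone in $F$ in general---its derivative in $u=1+F$ changes sign at $u=16f$---so one must case-split according to how $16f$ sits relative to the interval $[1,2]$ and track the constants carefully; the constants $8f$ and $F\le 1$ appear to be precisely calibrated so that the endpoint comparison goes through. A secondary point to be careful about throughout is that every inequality is componentwise and relies on $R\ge 0$ and $p\gg 0$: substituting a lower bound on $\nabla_p F$ into the numerator is valid only because that term is subtracted and because the multiplier $p^TR$ is positive, and bounding $f\ell\le fR$ is valid only because of the utilization constraint. Once the scalar inequality is pinned down, no further analytic input is required.
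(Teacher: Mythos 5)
Your computation of $\nabla_p V^{\mathrm{new}}$, the substitution of the hypothesis $(p^TR)\nabla_p F \ge 8fR$, and the bound $f\ell \le fR$ all track the paper's proof, which uses exactly this decomposition (it bounds the terms $\frac{1}{1+F}R + f\ell$ and $-\frac{p^TR}{(1+F)^2}\nabla_p F$ separately rather than merging them into a single rational function of $F$, but that is only an algebraic rearrangement). The genuine gap is the step you defer: the scalar inequality $\frac{1+F-8f}{(1+F)^2} \le \tfrac12 - 2f$ for all admissible $F \in [0,1]$ is false, and no case analysis on where $16f$ sits will rescue it. Writing $u = 1+F$, at $u=1$ (i.e.\ $F=0$, which is admissible under the stated hypothesis $F\le 1$, and is exactly the value at $\delta=0$) the requirement reads $1-8f \le \tfrac12-2f$, i.e.\ $f \ge 1/12$; nothing in the hypotheses forces the lending fee to be that large, and in practice it is tiny. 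Concretely, take $\ell=0$ and $\nabla_p F = 8fR/(p^TR)$ exactly: then $\nabla_p V^{\mathrm{new}} = (1-8f)R$, which exceeds $\bigl(\tfrac12-f\bigr)R$ whenever $f<1/14$. The ``calibration'' you hoped for is actually an equality at the other endpoint, $F=1$ (there $\frac{2-8f}{4} = \tfrac12-2f$); for your reduction to close one needs $F$ bounded \emph{below} by $1$, or $f$ of constant order, not the stated $F \le 1$.

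This is not purely a defect of your write-up: the paper's own proof asserts $\frac{1}{1+F}R+f\ell \le \bigl(\tfrac12+f\bigr)R$ ``using $F\le1$'', a step that in fact requires $1/(1+F)\le 1/2$, i.e.\ $F\ge1$, while its next step $\frac{1}{(1+F)^2}\ge\frac14$ does use $F\le1$; the two bounds are simultaneously valid only at $F=1$. So your argument is faithful to the paper's route, and the place where you honestly flag uncertainty is precisely where the paper is loose. But as a proof of the claim as stated (hypotheses $\nabla_p F \ge 8fR/(p^TR)$ and $F\le 1$ only), your proposal does not go through: the deferred endpoint verification fails at $F=0$, and the correct conclusion is that an additional lower bound on $F$ (or on $f$, or a constant growing like $1/f$ in place of $8$) is needed, which you should state rather than assert that the endpoint comparison succeeds.
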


\paragraph{Proof.}
We have that
\[
\begin{aligned}
    \nabla_p V^\mathrm{new}
    &= \frac{1}{1 + F(p,w^\star, R, \delta)} \cdot R  -\frac{p^TR}{(1 + F(p,w^\star, R, \delta))^2} \cdot \nabla_p F  + f\ell \\
    &\le \left(\tfrac{1}{2} + f\right)\cdot R -\frac{p^TR}{(1 + F(p,w^\star, R, \delta))^2} \cdot \nabla_p F \\
    &\le \left(\tfrac{1}{2} + f\right)\cdot R - \frac{1}{4} \cdot 8fR \\
    &= \left(\tfrac{1}{2} - f\right) R
\end{aligned}
\]
In the first inequality, we used that $F \le 1$ and $\ell \le R$, and in the
second we used the assumption on the gradient of $F$.

\paragraph{Discussion.}
In words, this claim shows that as long as the discount decays fast enough at
higher prices and there is a minimal amount of lending demand, the excess delta
added from participating in a PDLP is at most $(\frac{1}{2}-f)$ times greater
than that of holding the portfolio $R$. Note the upper bound on PDLP delta
decreases as fees increase, which suggests that the worst-case risky exposure
the protocol takes decreases at higher fees. This is often argued to be a reason
the JLP style assets are good collateral for lending protocols, where there
exist nearly \$500 million dollars of outstanding JLP loans on
Kamino.~\cite{marius1, kamino-risk}.\footnote{As a concrete example, suppose we
have a lending protocol where where are two assets $A_1, A_2$ that the protocol
offers loan-to-values (LTVs) of $L_1, L_2 \in (0, 1)$. This means that a user
can borrow up to $L_i$\% percent of the value of their $A_i$ collateral in
num\'eraire. A portfolio with weights $(w, 1-w)$ for $A_1, A_2$ can borrow up to
$L(w) = wL_1 + (1-w)L_2 = w(L_1 - L_2) + L_2 \in (0, 1)$ percent of the
portfolio value. Claim~\ref{claim:delta-bounded} implies that a lending protocol
(under no arbitrage) can offer an LTV of $\frac{1}{\frac{3}{2}-f} L(w)$ and have
at most the same price risk as the portfolio $(w, 1-w)$. }

\subsection{Example}
Recall that the weight of a PDLP with portfolio $R$ and asset prices $p$ is
$w(p, R) = ({p \odot R})/{p^T R}.$ GMX's GLP pool~\cite{gmx-repo} has the
discount function
\begin{equation}\label{eq:gmx-discounting}
    F(p, w^{\star}, R, \delta) = 
    \max(0, \gamma_b + \max_{i \in [n]} G_i(w(p, R),\, w(p, R+\delta),\, w^{\star}))
\end{equation}
where $G(w^b, w^a, w^{\star})$ is the piecewise function
\[
G_i(w^b, w^a, w^\star) = \begin{cases}0 & \mathrm{if}\ w^a_i = w^b_i\\ \gamma_t\left\vert\frac{w^b_i - w^*_i}{w^*_i}\right\vert & \mathrm{if}\ \vert w^a_i - w^*_i\vert < \vert w^b_i - w^*_i\vert \\ -\frac{\gamma_t}{2}\left(\left\vert\frac{w^b_i - w^*_i}{w^*_i}\right\vert + \left\vert\frac{w^a_i - w^*_i}{w^*_i}\right\vert\right) & \mathrm{else}\\ \end{cases}
\]
and $\gamma_b, \gamma_t \in (0, 1)$ are called the base and tax scalars.
We plot this function in Figure~\ref{fig:gmx-discount}.
We note that while this function is not $\mu$-strongly concave, one can apply the proximal gradient method~\cite{parikh2014proximal} to construct a $\mu$-strongly concave approximation of this function for which our results apply.

\begin{figure}
    \centering
    \includegraphics[width=0.5\linewidth]{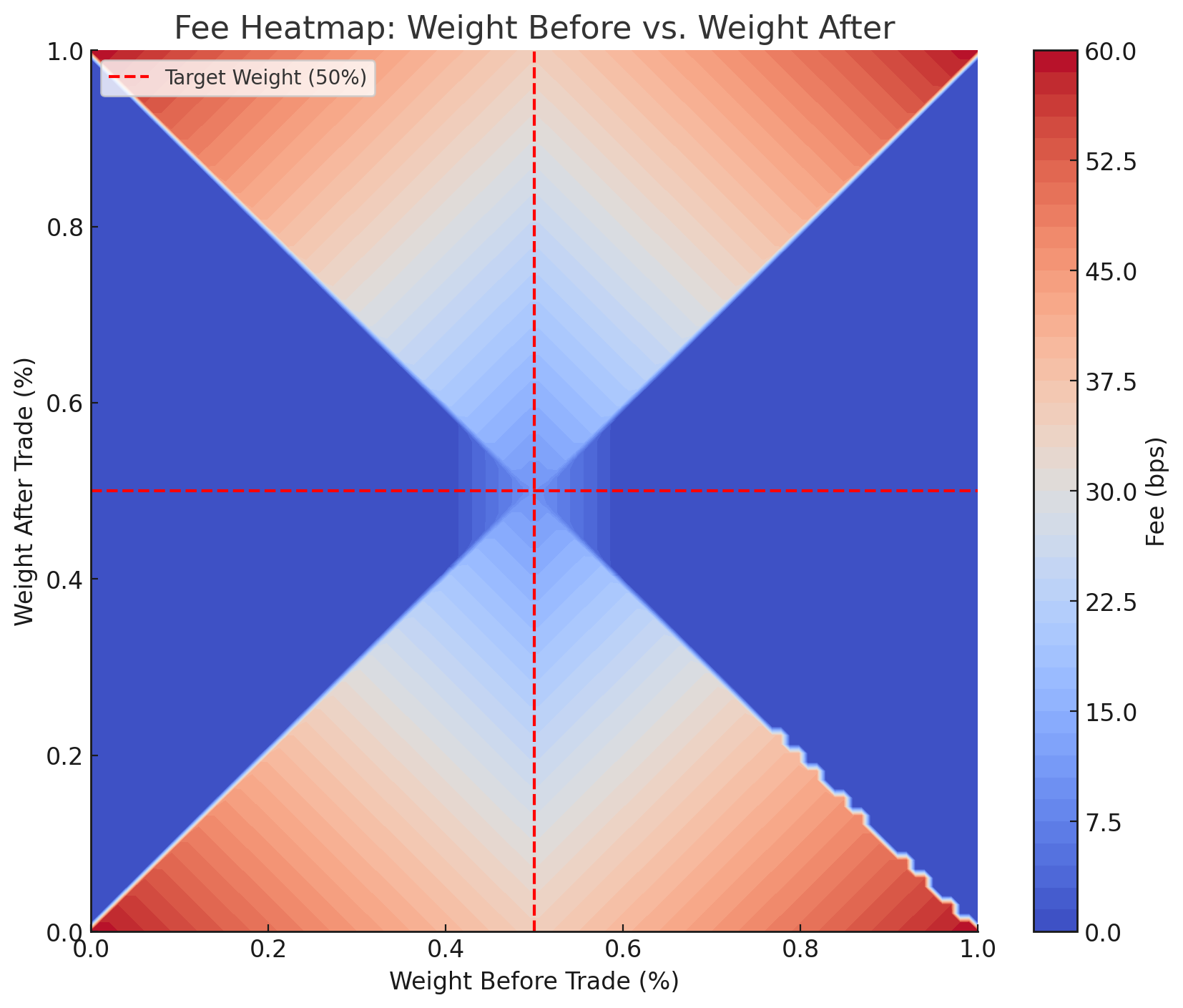}
    \caption{Heatmap of the GMX GLP discount function with two assets for a target weight $w^{\star} = 0.5$}
    \label{fig:gmx-discount}
\end{figure}


\section{Hedged PDLPs}\label{sec:hedged-pdlp}
Empirical observations suggest that liquidity providers can more easily hedge
their positions on PDLPs than on CFMMs. In this section, we study delta hedging
of PDLP LP positions: removing volatility in an LP position by shorting the 
volatile assets in the PDLP portfolio. A delta-hedged LP position aims to only
earn fees in the num\'eraire. In contrast to CFMM positions, LPs can cheaply 
approximately delta hedge PDLP positions, which likely contributed to the rapid 
rise of PDLP LP vaults (\eg,~\cite{Gauntlet-hJLP}).

\paragraph{Delta hedging.}
The delta of an $n$-asset portfolio, denoted by $\Delta \in \reals^n$, measures
the sensitivity of the portfolio value to price changes in each of the
underlying assets. (For a practical example, see~\cite[\S4.4]{Gauntlet-hJLP}.)
Given a portfolio of $n$ risky assets $R \in \reals^{n}_+$, a delta hedge is an
offsetting portfolio $\pi \in \reals^n$ so that the combined portfolio is less
sensitive to price fluctuations (the delta) than the original portfolio. If the
PDLP has $\ell \in \reals^{n}_+$ lent out of the $n$ assets, the the delta hedge
portfolio under the mean-variance
framework~\cite{madan2016adapted,hull2017optimal} is updated by solving the
following optimization problem:
\begin{equation}\label{eq:delta-hedge}
\pi^\mathrm{new} = \argmax_{x} \;
\underbrace{f \cdot {\ell}^T (x + R)}_{\text{fees}} - 
\underbrace{\frac{1}{2}(x - \pi)^T\mathsf{Diag}(c)(x - \pi)}_{\text{rebalancing cost}} 
- \underbrace{\frac{\gamma}{2}(x + \Delta)^T\Sigma(x + \Delta)}_{\text{risk}},
\end{equation}
where $\Sigma \succ 0$ is the covariance matrix for the risky assets, $c \in
\reals^{n}_+$ is the vector of rebalancing costs, and $\gamma \in \reals_+$ is a
risk aversion parameter. This strategy aims to maximize the fees earned from
loans while minimizing the risk of the delta-hedged portfolio, measured by the
variance. The parameter $\gamma$ controls the tradeoff between these two
objectives. The exact solution to this problem is
\[
    \pi^\mathrm{new} =
    \left(\gamma \Sigma + \mathsf{Diag}(c)\right)^{-1}
    \left[f \cdot {\ell} + \mathsf{Diag}(c)\pi - \gamma \Sigma\Delta \right].
\]
Without transaction costs, this expression simplifies to
\begin{equation}\label{eq:delta-hedge-portfolio}
\pi^\mathrm{new} = (f / \gamma) \Sigma^{-1} {\ell} - \Delta.
\end{equation}
For simplicity, we will prove our main result without transaction costs.

\subsection{When is a hedged PDLP's Sharpe ratio improved?}\label{subsec:sharpe}
We assess the performance of this portfolio using it's Sharpe ratio: the
expected return per unit of standard deviation (\ie, risk). In this subsection,
we give sufficient conditions such that the delta-hedged portfolio's Sharpe
ratio is improved over the unhedged portfolio. We prove this result in two
parts. First we show that the expectation of the delta-hedged portfolio is
non-decreasing. Second, we show that the variance of the delta-hedged portfolio
is non-increasing. These two facts together imply that the Sharpe ratio of the
delta-hedged portfolio is no worse than the unhedged portfolio.

\paragraph{Expectation is non-decreasing.}
Here, we derive conditions under which the expectation of the delta-hedged
portfolio is at least as large as the unhedged portfolio. Since the
delta-hedged portfolio is an additive offsetting portfolio, we aim to see when
\[
    \Expect\left[{p^T\pi}\right] 
    = \Expect\left[{(f/\gamma)p^T\Sigma^{-1}\ell - p^T\Delta}\right] 
    \ge 0,
\]
where $\pi$ is the solution to~\eqref{eq:delta-hedge-portfolio}. If 
$\lambda_\mathrm{min}$ and $\lambda_\mathrm{max}$ are the minimum and maximum
eigenvalues of the covariance matrix $\Sigma$, then we have the elementary
inequality
\[
    \lambda_\mathrm{max}^{-1} \cdot p^T\ell 
    \le p^T\Sigma^{-1}\ell.
\]
This implies that the sufficient condition for non-decreasing expectation is
\begin{equation}\label{eq:lend-delta-ineq}
   f \cdot \Expect\left[{p}^T {\ell}\right] \geq \gamma \lambda_{\max} \cdot \Expect\left[{p}^T{\Delta}\right]
\end{equation}
This condition states that as long as the fee revenue, driven by loan demand, is $\gamma \lambda_{\max}$
times the portfolio delta, the expectation is non-decreasing. This implies that
the correlation between the assets in the pool drives how profitable one needs
to be for delta hedging to work. We discuss when a PDLP should be split into
multiple pools in Appendix~\ref{app:multipool}.

\paragraph{Variance is non-increasing.}
Next, we derive conditions under which the variance of the delta-hedged
portfolio is at most as large as that of the unhedged portfolio. The variance
of the delta-hedged portfolio can be written as
\[
    \Var\left[p^T\pi + p^TR\right]
    =  \Var\left[p^T\pi\right] + 2\mathbf{Cov}\left(p^T\pi, p^TR\right) + \Var\left[p^TR\right].
\]
Thus, we must have that
\[
    \Var\left[p^T\pi\right] \le - 2\mathbf{Cov}\left(p^T\pi, p^TR\right).
\]
Using the fact that the delta hedge in negatively correlated to the unhedged
portfolio value and that covariance is bounded by the product of the standard
deviations, we have the sufficient condition
\[
    \Var\left[p^T\pi\right] \le 4 \cdot \Var\left[p^TR\right].
\]
In other words, as long as the hedge is anticorrelated to the portfolio and that
its variance is less than 4 times the variance of the portfolio, variance is
non-increasing under the delta hedge.

\paragraph{Main result.}
Putting the two conditions together, we have conditions under which the Sharpe
ratio of the delta-hedged portfolio is at least as good as the unhedged
portfolio. We summarize this result in the following claim:
\begin{claim}
Suppose we have a PDLP with assets $R \in \reals_+^n$ which have covariance
$\Sigma \prec \lambda_\mathrm{max} I$, loan fees $f$, and loan demand $\ell$.
Denote the asset prices by $p \in \reals_{++}^n$ and set the risk parameter
$\gamma \in \reals_+$. Then the Sharpe ratio of the delta-hedged portfolio,
which includes the original portfolio plus an offseting portfolio $\pi \in
\reals^n$, is at least as high as the unhedged portfolio if the following
conditions hold:
\begin{enumerate}
    \item The loan fees are at least $\gamma \lambda_{\max}$ times the portfolio delta:
    \[
        f \cdot \Expect[p^T\ell] \geq \gamma \lambda_{\max} \cdot \Expect[p^T\Delta].
    \]
    \item The variance of the offsetting portfolio is at most 4 times the 
    variance of the unhedged portfolio:
    \[
        \Var[p^T\pi] \leq 4 \cdot \Var[p^TR].
    \]
\end{enumerate}
\end{claim}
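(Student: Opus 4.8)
The plan is to treat the two preceding paragraphs (``expectation is non-decreasing'' and ``variance is non-increasing'') as the two halves of the argument and then conclude by monotonicity of the Sharpe functional $\mathsf{SR}(X) = \Expect[X]/\sqrt{\Var[X]}$: for a return with nonnegative mean, $\mathsf{SR}$ is nondecreasing in the mean and nonincreasing in the variance, so it suffices to prove
\[
\Expect[p^T(R+\pi)] \;\ge\; \Expect[p^TR] \;\ge\; 0
\qquad\text{and}\qquad
0 \;<\; \Var[p^T(R+\pi)] \;\le\; \Var[p^TR],
\]
where $\pi$ is the transaction-cost-free hedge $\pi = (f/\gamma)\Sigma^{-1}\ell - \Delta$ from~\eqref{eq:delta-hedge-portfolio}. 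Dividing the first chain by the square roots of the (reversed) second chain then gives $\mathsf{SR}(p^T(R+\pi)) \ge \mathsf{SR}(p^TR)$, which is the claim. So the whole proof is an assembly of the two intermediate estimates plus a one-line monotonicity argument.

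First I would establish the numerator bound. Since $p^T(R+\pi) - p^TR = (f/\gamma)p^T\Sigma^{-1}\ell - p^T\Delta$, non-decrease of the expectation is equivalent to $f\,\Expect[p^T\Sigma^{-1}\ell] \ge \gamma\,\Expect[p^T\Delta]$. I would invoke the elementary eigenvalue bound recorded above, $p^T\Sigma^{-1}\ell \ge \lambda_\mathrm{max}^{-1}\, p^T\ell$ (which uses $\Sigma \prec \lambda_\mathrm{max} I$ together with $p,\ell \ge 0$), take expectations, and then apply hypothesis~(1), $f\,\Expect[p^T\ell] \ge \gamma\lambda_\mathrm{max}\,\Expect[p^T\Delta]$. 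I would also note here that $\Expect[p^TR] \ge 0$ since $R \ge 0$ and $p > 0$ (strictly positive whenever there is any lending demand), so the mean of the hedged return is genuinely nonnegative and the Sharpe comparison is well posed.

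Next I would handle the denominator. Expanding, $\Var[p^T(R+\pi)] = \Var[p^T\pi] + 2\,\mathbf{Cov}(p^T\pi, p^TR) + \Var[p^TR]$, so it is enough to show $\Var[p^T\pi] + 2\,\mathbf{Cov}(p^T\pi,p^TR) \le 0$. Because the delta hedge is constructed to offset price exposure---$\pi$ equals $-\Delta$ plus the fee term $(f/\gamma)\Sigma^{-1}\ell$, and $\Delta = \nabla_p V^\mathrm{old} = R$---the value $p^T\pi$ is anticorrelated with $p^TR$, so $\mathbf{Cov}(p^T\pi, p^TR) = \rho\sqrt{\Var[p^T\pi]\,\Var[p^TR]}$ with $\rho \in [-1,0]$. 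The target inequality then reads $\Var[p^T\pi] \le 2|\rho|\sqrt{\Var[p^T\pi]\,\Var[p^TR]}$, and using that the hedge is (near-)perfectly anticorrelated it collapses to $\Var[p^T\pi] \le 4\,\Var[p^TR]$, which is exactly hypothesis~(2). Combining the numerator and denominator bounds via the monotonicity of $\mathsf{SR}$ finishes the proof.

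The step I expect to be the main obstacle is making the variance bound airtight: the clean ``factor of four'' threshold is sufficient precisely when the correlation coefficient $\rho$ between $p^T\pi$ and $p^TR$ is near $-1$, so the argument must make quantitative the statement that the fee term $(f/\gamma)\Sigma^{-1}\ell$ contributes negligible variance relative to $p^TR$ (e.g.\ under a low-variance loan-demand assumption), since this is what pins $\rho$ close to $-1$ and makes $-2\,\mathbf{Cov}(p^T\pi,p^TR)$ close to $2\sqrt{\Var[p^T\pi]\,\Var[p^TR]}$. Everything else---the bilinear eigenvalue inequality and the covariance expansion---is routine.
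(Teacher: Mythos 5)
Your proposal follows the paper's argument essentially verbatim: the elementary eigenvalue bound $p^T\Sigma^{-1}\ell \ge \lambda_\mathrm{max}^{-1}p^T\ell$ combined with condition (1) gives the non-decreasing mean, the expansion $\Var[p^T\pi + p^TR] = \Var[p^T\pi] + 2\mathbf{Cov}(p^T\pi,p^TR) + \Var[p^TR]$ combined with anticorrelation and condition (2) gives the non-increasing variance, and monotonicity of the Sharpe functional assembles the two. The looseness you flag in the variance step --- that the factor-of-four threshold is only sufficient when the correlation between $p^T\pi$ and $p^TR$ is close to $-1$, i.e.\ when the fee term $(f/\gamma)\Sigma^{-1}\ell$ contributes negligible variance --- is present in the paper's own argument as well, which invokes negative correlation and Cauchy--Schwarz without quantifying $\rho$, so your version is no weaker than the original.
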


\section{Conclusion and future work}
We presented the first formalization of Perpetual Demand Lending Pools (PDLPs),
which hold \$2.5 billion in assets as of December 2024 and have generated over
\$700 million in fees by lending to traders on decentralized perpetuals
exchanges. Our formulation of the arbitrage problems for various counterparties
suggests a wide, largely-unexplored design space for such protocols. The scale
and rapid growth of PDLPs begs a question: Can novel PDLPs service the same
notional volume with less capital?

Our framework provides a number of directions to explore. First, formalizing
multi-period PDLP arbitrage could inform the design of dynamic fees and
discounting mechanisms. Such multi-period models exist for CFMMs (\eg~via the
quantification of loss-versus-rebalancing). Second, dynamic target portfolios
may improve target weight mechanisms. Such adjustments have been proposed for
CFMMs and lending protocols, but these mechanisms will need different properties
for perpetuals protocols. Finally, the plethora of hedged and levered PDLP
strategies (especially for Jupiter~\cite{drift-vaults}) suggest that levered
PDLPs present an interesting avenue of future exploration. We hope that this
work helps researchers improve these mechanisms and make decentralized exchanges
competitive with their centralized counterparts.

\section{Acknowledgments}
We want to thank Tim Copeland, Alex Evans, Kshitij Kulkarni, Horace Pan, Krane, Victor Xu, JD Maturen, Diogenes Casares, Marius Ciubotariu, and Madison Piercy for helpful comments and suggestions.

\bibliographystyle{alpha}
\bibliography{bib.bib} 

\appendix
\section{Trading Dynamics}\label{app:dynamics}
We more formally describe the update dynamics for a perpetuals exchange using a PDLP in this section.
For each time $t \in \naturals$, let $\tau_t = \{(c^i_t, \delta^i_t, \eta^i_t, p_t)\}$ be the set of trades created by users at time $t$ and let $\mathcal{A}_t = \cup_{s \leq t} \tau_s$ be the set of all trades created at or before time $t$.
We define the set of liquidatable long and short trades at time $t$ as
\begin{align*}
\ell_L(t, p_1, \ldots, p_t) &= \left\{ (c^i_s, \delta^i_s, \eta^i_s, p_s) \in \mathcal{A}_t : \eta^i_s > 0, \wedge \forall s' \in [s, t) \; p_{s'} > L(\eta, p_s) \wedge p_t \leq L(\eta, p_s)  \right\} \\
\ell_S(t, p_1, \ldots, p_t) &= \left\{ (c^i_s, \delta^i_s, \eta^i_s, p_s) \in \mathcal{A}_t : \eta^i_s < 0, \wedge \forall s' \in [s, t) \; p_{s'} < L(\eta, p_s) \wedge p_t \geq L(\eta, p_s)  \right\}
\end{align*}
In words, this is the set of trades at times $s < t$ such that the trade was not liquidatable at any time $s'$ with $s \leq s' < t$ but was liquidatable at time $t$.
That is, the trades in these sets are first liquidatable at time $t$.
We define the set $\mathcal{T}_t$ of valid trades at time $t$ to be
\[
\mathcal{T}_t = \mathcal{A}_t - \bigcup_{s \leq t} (\ell_L(t, p_1, \ldots, p_t) \cup \ell_S(t, p_1, \ldots, p_t) )
\]
Finally, we let $\mathcal{L}_t, \mathcal{S}_t \subset \mathcal{T}$ be the set of long and/or short positions alive at time $t$.
Given these definitions, we can describe the dynamics formally as follows:
\begin{enumerate}
    \item The price is updated by the price oracle to $p_{t+1}$
    \item The liquidatable positions are removed:
    \begin{align*}
    \mathcal{L}_{t} &\leftarrow \mathcal{L}_t - \ell_L(t, p_1, \ldots, p_{t+1}) &&&
    \mathcal{S}_t &\leftarrow \mathcal{S}_t - \ell_S(p_1, \ldots, p_{t+1})
    \end{align*}
    \item For each market, the funding rate $\gamma^i_{t}$ for the previous period is paid out, where
    \[
    \gamma^i_{t} = \gamma_L(L^i_t, S^i_t, p_{t+1}, p_t) = \kappa\left(\frac{L^i_t}{S^i_t} - \frac{p_{t+1}}{p_t}\right)
    \]
    and the long and short positions are updated as
    \begin{align*}
    L^i_t &\leftarrow L^i_t + \gamma^i_t S^i_t \ones_{\gamma^i_t \geq 0} &&&
    S^i_t &\leftarrow S^i_t + \gamma^i_t L^i_t \ones_{\gamma^i_t \leq 0}
    \end{align*}
    Each trader $j \in [M]$ receives a pro-rata percentage of the funding rate based on the relative size of their position.
    Note that the payout for the previous period occurs before the new trades for the following period are added.
    \item LPs submit updates to their portfolio, $q^i_t \in \reals^n$, which modifies the current portfolio $R_t$ to $R_{t+1} = R_t + \sum_{i=1}^K q^i_t$ subject to the constraint that $R_t + \sum_{i=1}^{\ell} q^i_t \geq 0$ for all $\ell \in [k]$.
    This constraint ensures that the LPs cannot drain the pool.
    \item For each trader $j \in [M]$ that submits a trade $(c^i_t, \delta^i_t, \eta^i_t, p_t) \in \tau_t$:
    \begin{itemize}
        \item Collateral $c_t^i$ is chosen such that $(1+\gamma)|\eta_t^i| p_{t+1}^T c^t_i = p_{t+1}^T \delta_t^i$.
        If this is not possible due to the user's previous positions, the trade is rejected
        \item If $c_{L,t} + c_{S, t} + \sum_{i \leq j} c_t^i > R_{t+1}$, the trade from trader $j$ is rejected as it violates the solvency requirement
        \item Otherwise update $\mathcal{L}_{j, t}, \mathcal{S}_{j, t}$ via
        \begin{align*}
            \mathcal{L}_{j, t+1} &= \mathcal{L}_{j, t} + \ones_{\eta^j_t >0 } (c^j_t, \delta^j_t, \eta^j_t) &&&
            \mathcal{S}_{j, t+1} &= \mathcal{S}_{j, t} + \ones_{\eta^j_t < 0} (c^j_t, \delta^j_t, \eta^j_t)
        \end{align*}
    \end{itemize}
    \item Fees from the last period are paid out to liquidity providers from the traders
    \[
    R_{t+1} \leftarrow R_{t+1} + f\cdot(c_{L, t+1} + c_{S, t+1})
    \]
\end{enumerate}

\section{When is it better to have multiple pools?}\label{app:multipool}
Recall that GMX moved from a single-pool PDLP model to a multiple pool PDLP system.
In this system, a PDLP portfolio $R$ is partitioned into sub-portfolios $R = \sum_{i=1}^k R_i$ where $R_i$ represents the assets in pool $i \in [k]$.
Each pool services a different set of perpetuals with a unique target weight, among other parameters.
A natural question to ask is when is it more efficient to aggregate the $k$ pools into a single pool.

We will consider the simplest model to analyze this question with $k = 2$ and $\mathsf{supp}(R_1) \cap \mathsf{supp}(R_2) = \emptyset$, \ie,~the two PDLP pools have no assets in common.
We will assume that $|\mathsf{supp}(R_1)| = m_1, |\mathsf{supp}(R_2)| = m_2$ and that $m_1+m_2 = n$.
We will abuse notation slightly as also refer to the set of assets in $R_i$ as $R_i$.
We consider the full symmetric stochastic covariance matrix $\Sigma \in \reals^{n \times n}$ and write
\[
\Sigma = \left[
\begin{matrix}
    A & B \\ 
    B^T & C
\end{matrix}
\right]
\]
where $A \in \reals^{m_1 \times m_1}, B \in \reals^{m_1 \times m_2}, C \in \reals^{m_2 \times m_2}$ are matrices with $A, C$, non-singular and positive definite.

For mean-variance optimization, one computes the conditional covariance\footnote{Note that this is technically only the conditional covariance for multivariate normal distributions; for generic distributions it corresponds to the partial correlation~\cite[\S6.2.3]{zhang2006schur}.} using Schur Complements~\cite{ouellette1981schur, zhang2006schur}.
We denote the Schur complements as $\Sigma / A = A - B C^{-1} B^{T}$ and $\Sigma / C = C - B^T A^{-1} B$.
$\Sigma / A$ represents the covariance matrix after marginalizing the variables in $R_2$, and similarly for $\Sigma / C$~\cite[\S6.2.3]{zhang2006schur}.
For a covariance matrix $S$, we define the portfolio values $V_t^S$ as the delta hedged portfolio values when the covariance matrix is $S$.

Our goal is to find conditions under which the delta hedged portfolio for a single pool is better than an isolated pool.
We can view the isolated pool as having either the covariance $A$ or $C$ without the impact of the other, whereas the single pool needs to incorporate information from both pools.
We can view the conditional covariance matrices $\Sigma / A$ and $\Sigma / C$ as incorportating information from both pools and hence solving the mean-variance problem for these matrices represents the single pool solution.

If the single pool provides better returns than multiple pools, this corresponds to the following conditions: $\Expect[V_t^{\Sigma / A} - V(\mathcal{P}, p)] \geq \Expect[V_t^{A}-V(\mathcal{P}, p)]$ and $\Expect[V_t^{\Sigma / C} - V(\mathcal{P}, p)] \geq \Expect[V_t^{C}-V(\mathcal{P}, p)]$
which reduces to
\begin{align}\label{eq:single-pool-win-condition}
     \Expect[V_t^{\Sigma / A}] \geq \Expect[V_t^{A}] &&    \Expect[V_t^{\Sigma / C}] \geq \Expect[V_t^{C}]
\end{align}
We prove the following sufficent condition for when this occurs:
\begin{claim}\label{claim:schur}
    Suppose that $\sigma_{\min}(\Sigma_X) > \sigma_{\max}(A)$ and $\sigma_{\min}(\Sigma_Y) > \sigma_{\max}(B)$.
    Then~\eqref{eq:single-pool-win-condition} holds and a single pool provides better delta hedged returns than multiple pools.
\end{claim}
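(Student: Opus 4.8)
The plan is to reduce the two inequalities in~\eqref{eq:single-pool-win-condition} to a comparison of bilinear forms built from inverse covariance matrices, then to use Schur-complement structure together with the spectral hypotheses to sign that comparison.

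First I would evaluate $\Expect[V_t^S]$ for a generic covariance matrix $S$. Substituting the transaction-cost-free delta hedge~\eqref{eq:delta-hedge-portfolio}, $\pi^{\mathrm{new}} = (f/\gamma)\,S^{-1}\ell - \Delta$, into the portfolio value and taking expectations exactly as in~\S\ref{subsec:sharpe} gives
\[
\Expect[V_t^S] = V(\mathcal{P}, p) + \frac{f}{\gamma}\,\Expect\!\big[p^\top S^{-1}\ell\big] - \Expect\!\big[p^\top \Delta\big],
\]
and only the middle term depends on $S$. Using the appendix's identification of $\Sigma/C$ (resp. $\Sigma/A$) as the effective, post-marginalization covariance that the single pool faces on pool $1$'s (resp. pool $2$'s) assets, while the isolated pools face $A$ (resp. $C$),~\eqref{eq:single-pool-win-condition} becomes the pair of bilinear-form inequalities $\Expect[p^\top (\Sigma/C)^{-1}\ell] \ge \Expect[p^\top A^{-1}\ell]$ and $\Expect[p^\top (\Sigma/A)^{-1}\ell] \ge \Expect[p^\top C^{-1}\ell]$.

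Next I would invoke Schur-complement domination: since $BC^{-1}B^\top \succeq 0$ and $B^\top A^{-1}B \succeq 0$, we have $\Sigma/C = A - BC^{-1}B^\top \preceq A$ and $\Sigma/A = C - B^\top A^{-1}B \preceq C$, and inversion reverses the Loewner order, so $(\Sigma/C)^{-1}\succeq A^{-1}$ and $(\Sigma/A)^{-1}\succeq C^{-1}$. This is the right direction but not sufficient on its own: a Loewner inequality $M_1\succeq M_2$ only yields $x^\top(M_1-M_2)x\ge 0$ for a single vector $x$, whereas we must sign $p^\top(M_1-M_2)\ell$ for two distinct (though nonnegative) vectors. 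Bridging this gap is where the spectral hypotheses $\sigma_{\min}(\Sigma_X) > \sigma_{\max}(A)$ and $\sigma_{\min}(\Sigma_Y) > \sigma_{\max}(B)$ enter: writing, e.g., $p^\top(M_1-M_2)\ell = \tfrac{1}{4}\big[(p+\ell)^\top(M_1-M_2)(p+\ell) - (p-\ell)^\top(M_1-M_2)(p-\ell)\big]$, one lower-bounds the first, manifestly nonnegative term and upper-bounds the error term by Rayleigh quotients, and the assumed eigenvalue separation forces the first to dominate (using $p,\ell\ge 0$, hence $\|p-\ell\|_2\le\|p+\ell\|_2$). Taking expectations term by term then gives~\eqref{eq:single-pool-win-condition}, which by the first step is precisely the claim that a single pool weakly dominates the two isolated pools in expected delta-hedged return.

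I expect the step that signs the bilinear form on the nonnegative orthant to be the main obstacle: it is where the spectral hypotheses are genuinely needed, and it requires care in relating $\|p\pm\ell\|_2$ to the quantities $p^\top\ell$ and $p^\top S^{-1}\ell$ that actually appear in the payoff. A cleaner but stronger alternative, worth noting, is that whenever the precision-matrix difference $(\Sigma/C)^{-1}-A^{-1}$ has nonnegative entries (an $M$-matrix-type condition), the bilinear inequality is immediate from $p,\ell\ge 0$ and the spectral argument can be dispensed with entirely.
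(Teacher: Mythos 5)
Your overall route matches the paper's: substitute the transaction-cost-free hedge \eqref{eq:delta-hedge-portfolio} into the portfolio value as in \S\ref{subsec:sharpe}, cancel the $p^T\Delta$ terms, and reduce \eqref{eq:single-pool-win-condition} to comparing bilinear forms $\tilde p^T M \tilde\ell$ built from a Schur complement versus the corresponding diagonal block, to be signed using the spectral hypotheses. Where you differ is that you keep the precision matrices, comparing $(\Sigma/C)^{-1}$ against $A^{-1}$ (in the paper's notation the relevant Schur complement $A - BC^{-1}B^T$ is written $\Sigma/A$); this is the faithful reading of \eqref{eq:delta-hedge-portfolio}, whereas the paper's own proof compares $\tilde p^T(\Sigma/A)\tilde\ell$ with $\tilde p^T A\tilde\ell$ with no inverses and simply asserts that $\sigma_{\max}(A) < \sigma_{\min}(\Sigma/A)$ suffices ``for all $\tilde p, \tilde\ell$.'' You also correctly flag that a Loewner inequality such as $(\Sigma/C)^{-1}\succeq A^{-1}$ does not by itself sign a bilinear form evaluated at two distinct vectors, a point the paper glosses over.

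The genuine gap is that the bridging argument you sketch at exactly that step does not close it. By the Woodbury identity, $(\Sigma/C)^{-1} - A^{-1} = A^{-1}B\bigl(C - B^T A^{-1}B\bigr)^{-1}B^T A^{-1}$, a positive semidefinite matrix of rank at most $\mathrm{rank}(B)\le \min(m_1,m_2)$, hence typically singular on the pool-one coordinate space. Its smallest eigenvalue is then zero, so the polarization lower bound $\tfrac14\bigl[\lambda_{\min}\|p+\ell\|_2^2 - \lambda_{\max}\|p-\ell\|_2^2\bigr]$ is generically negative, and $\|p-\ell\|_2\le\|p+\ell\|_2$ for nonnegative $p,\ell$ cannot rescue it. Moreover, you never connect the stated hypotheses to a spectral gap of this difference matrix: they bound $\sigma_{\min}$ of the (undefined) $\Sigma_X,\Sigma_Y$ against $\sigma_{\max}(A),\sigma_{\max}(B)$, and since $A - BC^{-1}B^T \preceq A$ forces $\sigma_{\min}(A - BC^{-1}B^T)\le\sigma_{\min}(A)$, they cannot be read as a gap between the Schur complement and its own block. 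To be fair, the paper's proof makes essentially the same unproven leap---a spectral ordering alone cannot sign $\tilde p^T M\tilde\ell$ for distinct $\tilde p,\tilde\ell$---so your attempt is no less careful; in fact your closing remark is the right fix: entrywise nonnegativity of the precision-matrix difference (which, via the Woodbury expression, holds for instance when $A^{-1}B$ and $(C - B^T A^{-1}B)^{-1}$ are entrywise nonnegative) immediately yields the inequality on the nonnegative orthant, and making that the hypothesis would turn the claim into a rigorous statement.
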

\begin{proof}
Conditions~\eqref{eq:single-pool-win-condition} are equivalent to
\begin{equation}\label{eq:delta-portfolio}
\Expect[p^T(1, \pi_t^{\Sigma / A})] \geq \Expect[p^T(1, \pi_t^{A})]
\end{equation}
where $\pi^{\Sigma}_t$ is the optimal portfolio with covariance $\Sigma$.
We can proceed analogously to~\ref{subsec:sharpe}, which makes~\eqref{eq:delta-portfolio} equivalent to the condition
\[
\frac{f}{\gamma} \tilde{p}^T (\Sigma/A) \tilde{\ell} - \tilde{p}^T \tilde{\Delta}(\mathcal{P}) \geq \frac{f}{\gamma}\tilde{p}^T A \tilde{\ell} - \tilde{p}^T \tilde{\Delta}(\mathcal{P})
\]
which is equivalent to
\[
\frac{f}{\gamma}\tilde{p}^T (\Sigma / A) \tilde{\ell} \geq \frac{f}{\gamma}\tilde{p}^T A \tilde{\ell}
\]
A sufficient condition for this to hold for all $\tilde{p}, \tilde{\ell}$ is for $\sigma_{\max}(A) < \sigma_{\min}(\Sigma / A)$.
We note that this doesn't violate Cauchy's interlacing theorem style results which show that $ \sigma_{n-m_1}(\Sigma)< \sigma_{\max}(\Sigma / A) \leq \sigma_{\max}(\Sigma)$
where $\sigma_{k}(X)$ is the $k$th eigenvalue of a matrix $X$ since those are between the spectral of the full matrix $\Sigma$ and the Schur complement as opposed to the subcomponent~\cite[Ch. 2]{zhang2006schur}. 
The same proof applies for the other condition.
\end{proof}

\noindent We note these conditions are similar to hierarchical risk-parity methods~\cite{cotton2024schur, lopez2016building}.
We also note that these conditions suggest that GMX V2's recent dynamic pricing upgrade~\cite{gmx-dynamic-price-impact} is not sufficient alone for ensuring that V2 vaults are easily delta hedgeable as it doesn't take into account pool asset covariance as is done here.
We leave it for future work to connect the dynamic pricing model with the cost of delta hedging.

\end{document}